\def\ps@headings{%
\def\@oddhead{\mbox{}\scriptsize\rightmark \hfil \thepage}%
\def\@evenhead{\scriptsize\thepage \hfil \leftmark\mbox{}}%
\def\@oddfoot{}%
\def\@evenfoot{}}
\newtheorem{theorem}{Theorem}
\newtheorem{lemma}{Lemma}
\newtheorem{remark}{Remark}
\newenvironment{proof}{{Proof:}}{\hfill$\blacksquare$}
\begin{document}

\title{Adaptive Release Duration Modulation for Limited Molecule Production and Storage}

\author
{
Ladan Khaloopour, Mahtab Mirmohseni and Masoumeh Nasiri-Kenari\\
Department of Electrical Engineering, Sharif University of Technology, Tehran, Iran\\
Email: ladan.khaloopour@ee.sharif.edu, \{mirmohseni, mnasiri\}@sharif.edu
}

\maketitle

\begin{abstract}

The nature of molecular transmitter imposes some limitations on the molecule production process and its storage. As the molecules act the role of the information carriers, the limitations affect the transmission process and the system performance considerably. In this paper, we focus on the transmitter's limitations, in particular, the limited molecule production rate and the finite storage capacity. We consider a time-slotted communication where the transmitter opens its outlets and releases the stored molecules for a specific time duration to send bit ``1" and remains silent to send bit ``0". By changing the release duration, we propose an adaptive release duration modulation. The objective is to find the optimal transmission release duration to minimize the probability of error. We characterize the properties of the optimal release duration and use it to derive upper and lower bounds on the system performance. We see that the proposed modulation scheme improves the performance.
\end{abstract} 

\begin{IEEEkeywords}
Molecular transmitter, transmitter's limitations, production rate, storage capacity, release duration.
\end{IEEEkeywords}

\renewcommand*{\thefootnote}{}‎
\footnote{This paper was presented in part in the 2018 Iran Workshop on Communication and Information Theory (IWCIT) \cite{khaloopour2018adaptive}.}
\section{Introduction}
\IEEEPARstart{M}{olecular} communication (MC) is a promising communication technique among machines in nano-scale and has important applications in medicine, health-care and environmental sciences, where the conditions of transfer such as transmission range, nature of the medium, and possible transmitted carriers are consistent with molecules \cite{nakano2013molecular}.
MC has some similarities and differences with the classical communication.
The main difference is that in MC, molecules are carriers of information instead of electromagnetic or optical waves. Another important difference of MC with classical communication, which we focus on in this paper, is the transmitter's limitations.

The most important transmitter's limitation in MC is availability of molecules that must be released as information carriers. We study this problem for a common on/off keying (OOK) transmitter, where the transmitter
releases a specific concentration of molecules into the medium in order to send bit ``1" and remains silent in order to send bit ``0". However, the amount of available molecules at the transmitter is subject to some constraints due to the molecule production process. In fact, the molecules are produced at the transmitter with a limited-rate process, such as chemical reactions \cite{farsad2016comprehensive}. In addition, the amount of molecules that can be stored in the transmitter is limited and thus we face a finite storage capacity. The objective in this paper is to study the effect of these limitations on the performance of an MC system, considering the probability of error at the receiver.

 While molecular channels, receivers, and carriers are widely studied in the literature \cite{farsad2016comprehensive, malaka1995kinetic}, molecular transmitter's limitations are relatively unexplored. An ideal transmitter is considered in most of the works, which is a point source that can release any desired number of molecules in a very shot time at the beginning of a time-slot (\emph{i.e.}, an impulse function) \cite{noel2014improving, shahmohammadian2013nano}. There are few works that study the transmitter's challenges in MC \cite{nakano2013transmission, movahednasab2016adaptive, garralda2011simulation, jamali2016symbol, chude2015diffusion, arjmandi2016ion, bafghi2018diffusion}.
 The amplitude constraint on the number of transmitted molecules is studied in \cite{nakano2013transmission, movahednasab2016adaptive}. In \cite{garralda2011simulation}, the transmitter releases molecules in shape of a square or Gaussian pulse, where any number of molecules can be released during the pulse time (removing the impulse function constraint). The authors in \cite{jamali2016symbol} consider a point source transmitter with instantaneous molecule release but with random symbol interval. The focus of \cite{jamali2016symbol} is on synchronization schemes.
 
A more realistic transmitter model is introduced in \cite{arjmandi2016ion}, which is a point source with limited molecule production rate, located at the center of a spherical cell with finite storage capacity. There are many ion channels on its surface that can be opened or closed by applying an electrical voltage or a ligand concentration. The MC transmitter with these two limitations is also studied in \cite{bafghi2018diffusion} from information theoretic perspective and some bounds on its capacity are derived. Although the transmitter's limitations have been modeled in some of the above works, no specific transmission scheme has been proposed for these limitations.

In this paper, we focus on the transmitter's limitations, \emph{i.e.}, the limited production rate and finite storage capacity. We consider diffusion-based mediums as an important class of molecular transmissions where the molecules diffuse in the medium to reach the receiver. 
Our goal is to design a transmission scheme that has a good performance in terms of probability of error, considering the limitations. Our contribution and results are provided in two cases of the absence and the presence of inter-symbol interference (ISI):
\\
	\emph{{1. No-ISI case}}:
\\
$\bullet$ Transmitter design:
		we propose a coding scheme based on changing release durations (\emph{i.e.}, molecule release duration). This lets the transmitter to adapt the number of transmitted molecules to satisfy the transmitter constraints. 
\\
$\bullet$ Optimal release durations:
		we use the optimal maximum likelihood receiver, which has a threshold form with a fixed threshold, to formulate the optimal release durations: as the solution of an optimization problem which does not have a closed form solution. However, we characterize its important properties.
\\
$\bullet$ Optimal decoder:
		 we design the optimal decoder that minimizes the probability of error. Thereby, we obtain an adaptive threshold based receiver for the optimal release duration.
\\
$\bullet$ Probability of error analysis:
		we compute the probability of error. Using the properties of optimal release durations, we obtain upper and lower bounds on the probability of error.\\
	\emph{2. ISI case}:\\
$\bullet$  Transmitter design:
	similar to no-ISI case, we propose a coding scheme to adapt the number of transmitted molecules, satisfying the transmitter's constraints. 
\\	
$\bullet$  Optimal release durations:
	we formulate an optimization problem to find the optimal release durations for a fixed threshold receiver and obtain two sub-optimal solutions for the channel with one and two-symbol ISI.
\\	
$\bullet$  Optimal decoder:
	we design two sub-optimal decoders which use the sub-optimal release durations to minimize the probability of error for one and two-symbol ISI. Then, we derive the optimal adaptive threshold based receiver for the sub-optimal release duration.
\\
Furthermore, we provide numerical results:
		\\
$\bullet$  Storage capacity:
	we show that increasing the molecule storage capacity significantly improves the performance in terms of probability of error. 
\\	
$\bullet$  Receiver:
	it is shown that an adaptive threshold receiver performs better than a fixed threshold receiver as expected.
\\	
$\bullet$  Memory:
	While in our proposed adaptive strategy with variable release durations, the complexity of the system increases compared to non-adaptive strategy where release durations are fixed, we show that the required memories for the proposed transmitter to save optimal release durations and for the receiver to save the adaptive thresholds are finite.

The rest of this paper is organized as follows. Section \ref{model} presents a model for the transmitter's limitations. In Section \ref{problem_formulation}, a modulation scheme based on adaptive release duration is proposed. Main results in no-ISI case and ISI case are  provided in Sections \ref{sec_Main results no-ISI} and \ref{sec_main results ISI case}, respectively. Numerical results are presented in Section \ref{sec_numerical results}. Finally, in Section \ref{conclusion}, we conclude this paper.                                                       



\section{System model} \label{model}

We assume a time-slotted communication system with an OOK transmitter and the slot duration of $T$.
We consider a point-to-point molecular communication system including a transmitter, a fluid medium as the channel, and a receiver, shown in Fig. \ref{TxChRx}.
 \begin{figure}
	\centering
	\includegraphics[trim={1cm 6.2cm 3.7cm 4cm},clip, scale=0.25]{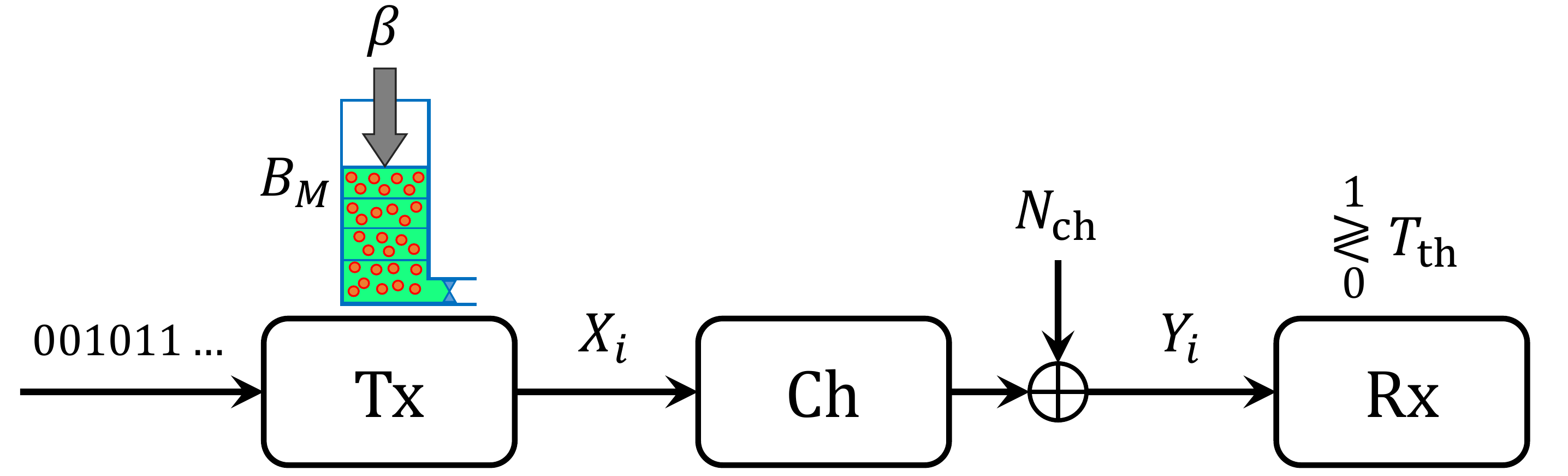}
	\caption{Molecular transmitter, channel and receiver}
	\label{TxChRx}
	\vspace{-1em}
\end{figure}
The transmitter releases a deterministic concentration of molecules into the channel. However, the transmitter has a limited molecule production rate $\beta$ and a finite molecule storage capacity $B_M$. The molecule storage is linearly recharged with a fixed rate $\beta$ at the transmitter up to its storage capacity $B_M$. We assume that the storage is fully charged in a duration less than slot duration ($B_M < \beta T$). Therefore, it is fully charged  if a ``0" is sent.

The released molecules diffuse in the channel according to Fick's second law of diffusion to reach the receiver, which results in a linear time-invariant channel.
We consider an absorbing receiver which absorbs molecules hitting its surface, with probability $p_k,\text{ } k = 0,1, \ldots$, in the current and next $k$ slots \cite{gohari2016information}. The reception process can be modeled as a Poisson process at the receiver \cite{mosayebi2014receivers}. In addition, we assume a Poisson background noise, $N_{\text{ch}}$, with parameter $\lambda$ as \cite{eckford2007nanoscale}. 
We consider a transmitter, located at $\vec{r}=0$, releases ${X}_i$ molecules at $i$th time-slot as a pulse train $\sum_{k}^{}{X}_i \delta(\vec{r}=0)\Pi(t-kT)$, where 
\begin{equation}
\Pi(t)=
\begin{cases}
1,\text{ } 0 \leq t \leq t_r, \\
0, \text{ }\text{otherwise},
\end{cases} \nonumber
\end{equation}
and $t_r$ is the release duration. Thus, the number of received molecules is
\begin{align}
Y_i \sim \text{Poisson}\left(\sum_{k=0}^{\infty} {p_{k} {X}_{i-k}}+\lambda\right). \label{poiss_dist}
\end{align}


\section{Problem formulation} \label{problem_formulation}

\subsection{No-ISI case:} \label{prob_formulate_no ISI}

First we study a simple case of no-ISI (\emph{i.e.}, $p_0=1, p_{k}=0 \text{ for }k\geq 1$).
We focus on a specific transmission scheme, where to send bit ``0", the surface outlets of the transmitter are closed and no molecule is released. To send bit ``1", the outlets are opened for some specific time duration and the stored molecules are released. Then the outlets are closed and the storage starts recharging.
In fact, as soon as the outlets are closed after the release duration, the transmitter starts storing the produced molecules.

\subsubsection{Non-adaptive communication strategy} \label{problem_formul_non-adapt}

First, consider a simple strategy where the release duration $T_M$ is fixed.
We set $T_M$ such that the storage is filled in time duration $T-T_M$
\begin{align}
T_M=\max_{B_M \geq \beta(T-t)} t =T-\frac{B_M}{\beta}.  \label{max_T1}
\end{align}
Thus, the storage is full at the beginning of each time-slot and $B_M=\beta(T-T_M)$.
In other words, $T_M$ is chosen such that $T-T_M$ is the minimum  transmitter required time to refill the storage; otherwise, we lose some molecule production time (because some molecules cannot be stored).
Assume that bit ``1" is to be transmitted at slot $i$.
As soon as the storage fills up to its capacity $B_M$, the outlets are opened for duration $T_M$, and $\beta T_M$ molecules are also produced at this time. Therefore the total number of released molecules for bit ``1" is 
$
{X_i}=B_M+\beta T_M=\beta T. \nonumber
$
If we quantize the release duration into very short intervals, substituting the above amount of $X_i$ in (\ref{poiss_dist}) and using thinning property of Poisson distribution result in
\begin{align}
Y_i \sim \text{Poisson}\left({\beta T}+\lambda\right). \label{Yi_Poiss_fixed}
\end{align}
At the receiver, we use the optimal maximum likelihood (ML) decoding approach. From (\ref{Yi_Poiss_fixed}), the ML receiver has a fixed threshold form. If we define $ M=\beta T$, the optimum threshold is derived as
\begin{align}
T_{\text{th}}=\frac{ M}{\text{ln}(1+\frac{ M}{\lambda})}. \nonumber
\end{align}
Since $ M \gg \lambda$, we have $T_{\text{th}}< M$. The receiver observes $Y_i=y$ and decides ${\hat X}_i=1$ if $y \geq T_{\text{th}}$ and  ${\hat X}_i=0$ if $y<T_{\text{th}}$.
Thus, the total probability of error is
\begin{align}
&P_e = \frac{1}{2}\left(P_{e|0}(0)+P_{e|1}(M)\right), \label{Pe_fixed} 
\end{align}
where 
\begin{align}
&P_{e|1}(x)=1-P_{e|0}(x)=\sum_{y< T_\text{th}}^{}e^{-(x+\lambda)}\frac{(x+\lambda)^y}{y!}. \nonumber
\end{align}

\subsubsection{Adaptive communication strategy} \label{problem_formul_adapt}

Now, we propose an adaptive strategy. We improve the system performance in terms of probability of error at the receiver by making the release duration adaptively variable. Thus, the storage may not be fully charged at the beginning of each time-slot. 
The duration of a time-slot is chosen such that transmitting a ``0" fully charges the storage. As a result, the transmitter has to know the number of subsequent ``1"s that has been transmitted prior to the current transmission. Thus, the system is state-dependent. We say that the system is in state $s_j$ if $j$ subsequent ``1"s are transmitted after the last transmitted ``0". In state $s_j$, we change the release duration $T_M$ to $T_M+\tau_{j+1}$. We call $\tau_i \in \mathbb{R}$ as release duration increments for $i=1, 2,\ldots$. Note that it can be negative to decrease the release duration. The release process in each slot begins when the storage is full (at $kT+\tau_1+\cdots+\tau_j$, in state $s_j$ for $k$-th transmitted bit).
The transmitter state machine is shown in Fig. \ref{TX state machine}.
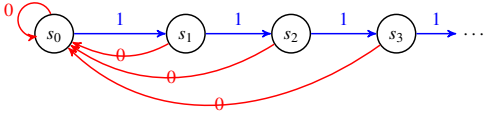
\begin{figure}
	\begin{center}
		\vspace{-0.5em}
		\scalebox{0.7}{%
			\begin{tikzpicture} [
			point/.style={coordinate},>=stealth',thick,draw=black,
			tip/.style={->,shorten >=0.007pt},
			text height=1.5ex,text depth=.25ex 
			]
			\path (2,8) node[draw,shape=circle] (A) {$s_0$};  
			\path (4.5,8) node[draw,shape=circle] (B) {$s_1$};
			\path (6.5,8) node[draw,shape=circle] (C) {$s_2$};
			\path (8.5,8) node[draw,shape=circle] (D) {$s_3$};  
			\path (10,8)	node (V) {$\cdots$}; 
			\draw [blue, ->] (A) -- node [above, midway] {$1$} (B);
			\draw [blue, ->] (B) -- node [above, midway] {$1$} (C);
			\draw [blue, ->] (C) -- node [above, midway] {$1$} (D);
			\draw [blue, ->] (D) -- node [above, midway] {$1$} (V);
			\draw [red, ->, bend left] (B) [out=30 , in=155] to node {$0$} (A);
			\draw [red, ->, bend left] (C) [out=32 , in=145] to node {$0$} (A);
			\draw [red, ->, bend left] (D) [out=35 , in=135] to node {$0$} (A);
			\draw [red, ->] (A.100) arc (20:277:3.2mm) node[pos=0.5,left] {$0$} (A);
			\end{tikzpicture}	}
	\vspace{-2em}
	\end{center}
	\caption{Transmitter finite state machine} 
	\label{TX state machine}
\vspace{-1em}
\end{figure}
The probability of state $s_j$ is
\begin{align}
P(s_j)=\frac{1}{2^{j+1}}. \label{state_probes}
\end{align}
Let $\Delta_n= \beta \tau_n, n\in \mathbb{N}$ denote the amount of the released molecules increment in state $s_{n-1}$. Therefore, the probability of error is
\begin{align}
P_e &=\frac{1}{2}\Big(P_{e|0}(0)+\sum_{j=1}^{\infty} P(s_{j-1})P_{e|1,s_{j-1} (M+\Delta_{j})}\Big)\nonumber\\
&\overset{(a)} =\frac{1}{2}\Big(P_{e|0}(0)+\sum_{j=1}^{\infty}\frac{1}{2^{j}}P_{e|1}(M+\Delta_{j})\Big),
\label{Pe0_Pe1_dTi}
\end{align}
where (a) follows from (\ref{state_probes}) and $P_{e|1,s_{j-1}}$ is the error probability of bit ``1" in state $s_{j-1}$.

Our goal is to design $\tau_i$s in order to minimize the probability of error. Because $P_{e|0}$ does not depend on $\tau_i$, we only consider $P_{e|1}$ and solve an optimization problem. Let us define 
\begin{align}
F(\{\Delta_i\}_{i=1}^{\infty}) \overset{\vartriangle}={\sum_{j=1}^{\infty}\frac{1}{2^{j}}P_{e|1}\left(M+\Delta_{j}\right)} \label{def_F(delta_i)},
\end{align}
then our problem is
\begin{align}
\min_{\{\Delta_i\}_{i=1}^{\infty}}&F(\{\Delta_i\}_{i=1}^{\infty}), \label{minF}\\
{\text{s. t. : }}&
	C_{1i}(\{\Delta_i\}_{i=1}^{\infty}): B_M-\sum_{j=1}^{i}\Delta_j\geq 0 ,\text{ }\forall i ,  \label{C_{1i}_no isi} \\  
	&C_{2i}(\{\Delta_i\}_{i=1}^{\infty}): \sum_{j=1}^{i}\Delta_j\geq 0 ,\text{ }\forall i.    \label{C_{2i}_no isi}                   
\end{align}
(\ref{C_{1i}_no isi}) indicates that the total increment is upper bounded by the total storage capacity ($B_M=\beta T$) and (\ref{C_{2i}_no isi}) is to ensure that we do not waste molecule production time. Due to limited storage capacity, if in state $s_i$, $\sum_{j=1}^{i}\Delta_i < 0$, we have $\sum_{j=1}^{i}\tau_j< 0$ and thus
\begin{align}
T- (T_M+\sum_{j=1}^{i}\tau_j)>\frac{B_M}{\beta}.   \nonumber             
\end{align}
It means that the storage is filled before the next slot and the production process stopped for a time duration.

Now, we motivate our adaptive communication strategy by showing that (in a simple example), increasing the release duration of non-adaptive communication strategy indeed improves the performance. Assume that bit ``1" must be transmitted in the $i$-th slot.
If $T_M$ is chosen as (\ref{max_T1}), the storage is full at the beginning of $(i+1)$-th time-slot but no molecule is needed if the transmitted bit at the $(i+1)$-th time-slot is ``0". It is possible to use this charged storage to increase the released molecules in the $i$-th time-slot (for bit ``1"). From (\ref{Pe0_Pe1_dTi}), it is observed that $P_{e|0}$ is fixed. However, $P_{e|1}$ is a decreasing function of the number of released molecules, because
\begin{align}
\frac{d}{dx}P_{e|1}(x)&=-e^{-(x+\lambda)}\frac{(x+\lambda)^{T_\text{th}}}{T_\text{th}!}<0, \label{d/dx P_e1}
\end{align}
where $x$ is the number of released molecules. As a result, if we increase release duration for sending ``1", then the number of released molecules increases and probability of error decreases.

As seen in Fig. \ref{DT_is_coloured},
\begin{figure}
	\centering
	\includegraphics[trim={1.7cm 7.1cm 1.3cm 7.6cm},clip, scale=0.29]{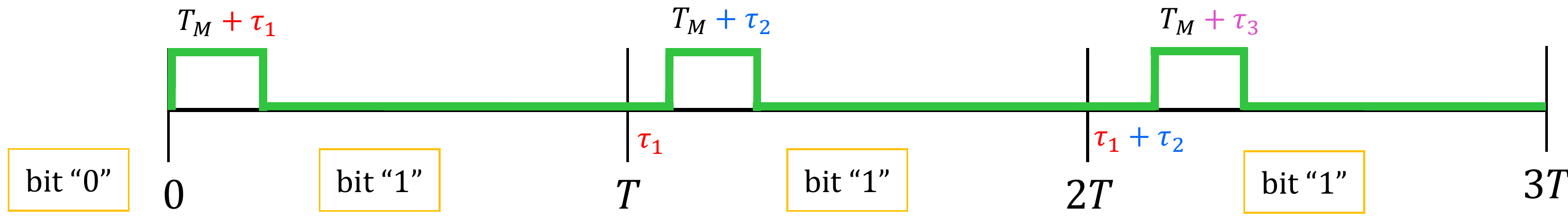}
	\caption{release duration increments}
	\label{DT_is_coloured}
	\vspace{-1em}
\end{figure}
at state $s_0$, if bit ``1" comes, we increase the release duration by $\tau_1$. So for the next bit ``1", the storage is not full at the beginning of time-slot. To compensate this reduction, we wait for time $\tau_1$ in order to the storage be filled. Then, the outlets are opened for release duration $T_M+\tau_2$.  Note that we should have $\tau_1+\tau_2 \geq 0$ not to waste molecule production time. For state $s_i$, we wait for time $\tau_1+\cdots+\tau_i$ in order to the storage be filled and then open the outlets for duration $T_M+\tau_{i+1}$. Totally, we increase the transmitted molecule number in state $s_i$ by $\beta \tau_i$ (which was $\beta T$ in the non-adaptive strategy). If $\tau_i$s are positive and $\tau_1+\cdots+\tau_{i-1} \leq T-T_M \text{ for }i=1,\cdots, J$, the number of released molecules increases in states $s_i  \text{ for }i=1,\cdots, J$ and remains fixed ($\emph{i. e.}, T_M$) for $i>J$. As a result, this scheme causes a smaller probability of error.

\subsection{ISI case:} \label{prob_formulate_ISI}

Similar to no-ISI case, our goal is to minimize $P_e$ subject to the constraints in (\ref{C_{1i}_no isi}) and (\ref{C_{2i}_no isi}).

If the channel memory is $K$ slots (\emph{i.e.}, $p_k=0, k>K$ in (\ref{poiss_dist})), and if we denote the increment of molecule number to send ``1" in state $s_i$ by $\Delta_{i+1}$, $P_e$ for a fixed threshold receiver can be written as the following form.
\begin{align}
P_e&=\frac{1}{2}\sum_{j=0}^{\infty}\text{P}(s_j)(P_{e|0,s_j}+P_{e|1,s_j})\nonumber\\ 
&=\frac{1}{2}\sum_{j=0}^{\infty}\frac{1}{2^{j+1}}\Big(P_{e|0}(w_{0}(\{\Delta_i\}_{i=1}^{K}))+P_{e|1}(w_{1}(\{\Delta_i\}_{i=1}^{K+1})\Big), \nonumber
\end{align}
where $w_{0}(\cdot)$ and $w_{1}(\cdot)$ are the number of received molecules when the current bit is ``0" and ``1", respectively.
\begin{align}
&w_{0}=\lambda+c_0M+c_1\Delta_1+\cdots+c_{K}\Delta_{K}, \nonumber\\
&w_{1}=\lambda+d_0M+d_1\Delta_1+\cdots+d_{K+1}\Delta_{K+1}.\label{weigthed_sum}
\end{align}
The $c_i$ and $ d_i$s are weighted sum of $M$ and $p_i$s, depending on the current and $K$ previous states and bits. We explain this weights through an example.
Assume that $K=4$ and bit sequence $(B_1,\ldots,B_{10})=(011101101)$ has been transmitted. Now, we send $B_{11}\in \{0,1\}$ and we have
\begin{align}
w_{0}&=(p_1+p_3+p_4)M+p_1\Delta_1+p_2\times 0 +p_3\Delta_2 +p_4\Delta_1 \nonumber\\
&=\underbrace{(p_1+p_3+p_4)}_{c_0}M+\underbrace{(p_1+p_4)}_{c_1}\Delta_1+\underbrace{p_3}_{c_2}\Delta_2.  \nonumber \\
w_{1}&=(p_0+p_1+p_3+p_4)M \nonumber\\
& \text{ }\text{ }\text{ }+p_0\Delta_2+p_1\Delta_1+p_2\times 0 +p_3\Delta_2 +p_4\Delta_1 \nonumber\\
&=\underbrace{(p_0+p_1+p_3+p_4)}_{d_0}M+\underbrace{(p_1+p_4)}_{d_1}\Delta_1+\underbrace{(p_0+p_3)}_{d_2}\Delta_2.  \nonumber
\end{align}



\section{Main results in no-ISI case} \label{sec_Main results no-ISI}
\subsubsection{Optimal solution}

It can be easily shown that the objective function $F(.)$ in (\ref{def_F(delta_i)}), is convex (see Appendix \ref{Convex objective function}) and the domain of optimization problem in (\ref{minF}) is compact ($\Delta_i \in [0, \beta(T-T_M)]\text{ for }i=1, 2, \cdots$). Thus, the global minimum exists. We start to find the solution using the Lagrangian method. We have the KKT conditions at the optimal point  (shown as $\{\Delta^*_i\}_{i=1}^{\infty}$), that the regularity condition is Linear Independence Constraint Qualification (LICQ). As a result, for the active conditions, we have
\begin{align}
\nabla F(\{\Delta_i\}_{i=1}^{\infty})=\sum_{i=1}^{\infty}\mu^*_{1i}\nabla C_{1i}+\sum_{i=1}^{\infty}\mu^*_{2i}\nabla C_{2i},  \label{g1}
\end{align}
where $\nabla$ denotes the vector differential operator ($\nabla=(\frac{d}{d\Delta_1},\frac{d}{d\Delta_2},\cdots) $), $\mu^*_{1i} \text{ and } \mu^*_{2i}$ are non-negative Lagrange multipliers for $i=1,2,\cdots$, and $C_{ki}, k=1,2$ have been defined in (\ref{C_{1i}_no isi}) and (\ref{C_{2i}_no isi}).
The above equation results in the following three properties for the optimal solution, stated in Lemmas 1, 2 and 3. The proofs of these lemmas are provided in Appendixes \ref{Appx_Proof_lem_property1}, \ref{Appx_Proof_lem_property2} and \ref{Appx_Proof_lem_property3}, respectively.

\begin{lemma} \label{lem_1}
The sequence of $\{\Delta^*_i\}_{i=1}^\infty$ is decreasing and nonnegative.
\end{lemma}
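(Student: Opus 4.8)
The plan is to characterize the optimum through feasible first-order (variational) conditions obtained from single coordinate-pair perturbations, rather than carrying the full Lagrange-multiplier system of (\ref{g1}) around. Two facts are used throughout: $P_{e|1}$ is strictly decreasing, with $-P_{e|1}'(x)=g(x):=e^{-(x+\lambda)}(x+\lambda)^{T_\text{th}}/T_\text{th}!>0$ from (\ref{d/dx P_e1}); and, by the convexity shown in Appendix \ref{Convex objective function}, $P_{e|1}$ is convex on the optimization domain, so $g$ is positive and non-increasing there. I also record that the feasible set is exactly $\{\,0\le S_i\le B_M\ \forall i\,\}$, where $S_i:=\sum_{j\le i}\Delta_j$, so that $\Delta_i=S_i-S_{i-1}$, and that the weights $2^{-j}$ in (\ref{def_F(delta_i)}) are strictly decreasing in $j$.

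First I would prove that the sequence is decreasing, $\Delta^*_k\ge\Delta^*_{k+1}$ for every $k$. The key device is the swap $\Delta_k\mapsto\Delta^*_k+\epsilon$, $\Delta_{k+1}\mapsto\Delta^*_{k+1}-\epsilon$, which changes only the single partial sum $S_k\mapsto S^*_k+\epsilon$ and leaves every other $S_i$, hence the feasibility of every other constraint, intact. Along this direction the objective derivative at $\epsilon=0$ is $\tfrac{1}{2^{k+1}}\bigl(g(M+\Delta^*_{k+1})-2\,g(M+\Delta^*_k)\bigr)$. I then split on the value of $S^*_k$: if $0<S^*_k<B_M$, both signs of $\epsilon$ are feasible, so optimality forces this derivative to vanish, giving $g(M+\Delta^*_{k+1})=2\,g(M+\Delta^*_k)>g(M+\Delta^*_k)$ and hence, since $g$ is non-increasing, $\Delta^*_k\ge\Delta^*_{k+1}$; if $S^*_k=B_M$, then feasibility alone gives $\Delta^*_k=B_M-S^*_{k-1}\ge0$ and $\Delta^*_{k+1}=S^*_{k+1}-B_M\le0$, so again $\Delta^*_k\ge\Delta^*_{k+1}$.

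The remaining case $S^*_k=0$ I would rule out entirely. There only $\epsilon>0$ is feasible, so the derivative must be $\ge0$, i.e. $g(M+\Delta^*_{k+1})\ge2\,g(M+\Delta^*_k)$; but $S^*_k=0$ forces $\Delta^*_k=-S^*_{k-1}\le0\le S^*_{k+1}=\Delta^*_{k+1}$, and $g$ non-increasing then gives $g(M+\Delta^*_{k+1})\le g(M+\Delta^*_k)$, whence $g(M+\Delta^*_k)\ge2\,g(M+\Delta^*_k)$, contradicting $g>0$. Thus the ``no-waste'' constraints are never active at the optimum, and the three cases collapse to $\Delta^*_k\ge\Delta^*_{k+1}$ for all $k$.

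Finally, nonnegativity comes for free from monotonicity together with the lower constraints. A decreasing sequence that is bounded below (feasibility gives $\Delta^*_k=S^*_k-S^*_{k-1}\ge-B_M$) converges to some $L\ge-B_M$; if $L<0$, then $\Delta^*_k<L/2<0$ for all large $k$, forcing $S_i\to-\infty$ and violating $S_i\ge0$. Hence $L\ge0$ and every $\Delta^*_k\ge L\ge0$. I expect the main obstacle to be the boundary bookkeeping in the monotonicity step: getting the one-sided feasible directions right at $S^*_k\in\{0,B_M\}$, and in particular excluding $S^*_k=0$, while making sure the inference ``$g(M+\Delta^*_{k+1})>g(M+\Delta^*_k)\Rightarrow\Delta^*_k\ge\Delta^*_{k+1}$'' is licensed. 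That inference is precisely where the convexity of $P_{e|1}$ (monotonicity of $g$) from Appendix \ref{Convex objective function} is essential, so I would first confirm that the operating regime keeps the Poisson mean $M+\Delta+\lambda$ above $T_\text{th}$ on the whole feasible domain, since otherwise $g$ need not be non-increasing.
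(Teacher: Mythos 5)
Your proof is correct, and it takes a genuinely different route from the paper's. The paper's proof in Appendix \ref{Appx_Proof_lem_property1} works top-down from the Lagrangian (\ref{g1}): it \emph{posits} a specific active-set pattern --- only $C_{1J}$ active, all the no-waste constraints $C_{2i}$ slack, $\Delta^*_j=0$ beyond $J$ as in (\ref{positive_sum}) --- verifies that this yields a valid KKT point of the convex problem (hence the global minimum), reads off the stationarity relation (\ref{g2})--(\ref{mu_F1_F2}), deduces the decreasing property from convexity of $P_{e|1}$, and obtains nonnegativity by contradiction against the active budget constraint. You instead work bottom-up at the (unknown) optimum with two-coordinate exchange perturbations, which is multiplier-free and, in the interior case, recovers exactly the paper's adjacent-index relation $g(M+\Delta^*_{k+1})=2\,g(M+\Delta^*_k)$, i.e.\ (\ref{mu_F1_F2}) with $k-h=1$. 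Your approach buys two things the paper does not deliver: you \emph{prove} rather than assume that the constraints $C_{2i}$ are never active at the optimum (your $S^*_k=0$ contradiction), whereas the paper simply stipulates $C_{2i}\neq 0$; and your proof of monotonicity does not presuppose the finite-support structure ($\Delta^*_j=0$ for $j>J$) that is only established later in Lemma \ref{lem_2} and is implicitly baked into (\ref{positive_sum}). Your tail argument for nonnegativity (a non-increasing sequence with a persistent negative term drives $S_i\to-\infty$, violating $S_i\geq 0$) is likewise self-contained, and could even be shortened: if some $\Delta^*_m<0$, monotonicity gives $\Delta^*_k\leq\Delta^*_m<0$ for all $k\geq m$ directly, no limit needed. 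The one shared caveat, which you correctly flag, is that inferring $\Delta^*_k\geq\Delta^*_{k+1}$ from $g(M+\Delta^*_{k+1})>g(M+\Delta^*_k)$ needs $g$ non-increasing on the relevant range, i.e.\ $M+\Delta+\lambda>T_\text{th}$ down to $\Delta=-B_M$ (equivalently $\beta T_M+\lambda>T_\text{th}$, which is not automatic for all parameter choices); the paper's Appendix \ref{Convex objective function} asserts this inequality without justification and its Lemma \ref{lem_1} proof relies on it in the same way, so your argument is on at least equal footing there, and more honest about the assumption.
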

\begin{lemma}\label{lem_2}
There exists an index $J$ such that $\Delta^*_j=0$, for $j>J$. This means that a finite number of release duration increments are positive (\emph{i.e.}, non-zero).
\end{lemma}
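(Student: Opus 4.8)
The plan is to combine the monotonicity from Lemma~\ref{lem_1} with the stationarity condition (\ref{g1}) and complementary slackness. First I would observe that Lemma~\ref{lem_1} reduces the claim to showing that the optimal sequence is \emph{not} strictly positive everywhere: since $\{\Delta^*_i\}$ is nonnegative and decreasing, as soon as one term vanishes all later terms vanish too, so its support is an initial segment $\{1,\dots,J\}$. Hence it suffices to rule out the alternative that $\Delta^*_j>0$ for every $j$, and I would argue this by contradiction.

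So assume $\Delta^*_j>0$ for all $j$. I would first dispose of the multipliers $\mu^*_{2i}$: since every partial sum $\sum_{j=1}^{i}\Delta^*_j\geq\Delta^*_1>0$ is strictly positive, each constraint $C_{2i}$ in (\ref{C_{2i}_no isi}) is inactive, so complementary slackness gives $\mu^*_{2i}=0$ for all $i$. The key structural observation is then that, under the standing assumption, the partial sums $P_i=\sum_{j=1}^{i}\Delta^*_j$ are \emph{strictly} increasing; therefore the equality $P_i=B_M$ required for $C_{1i}$ in (\ref{C_{1i}_no isi}) to be active can hold for at most one index. By complementary slackness, at most one multiplier $\mu^*_{1i}$ is nonzero.

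Next I would read off the $k$-th component of (\ref{g1}). Using $\partial C_{1i}/\partial\Delta_k=-1$ for $k\leq i$ and $0$ otherwise, together with $\mu^*_{2i}=0$, the stationarity condition becomes $\frac{1}{2^k}\big(-P'_{e|1}(M+\Delta^*_k)\big)=\sum_{i\geq k}\mu^*_{1i}$. By (\ref{d/dx P_e1}) the left-hand side is strictly positive for every $k$, so the tail sum $\sum_{i\geq k}\mu^*_{1i}$ must be positive for all $k$; this forces active constraints $C_{1i}$ with arbitrarily large index $i$. That contradicts the previous paragraph, which permits at most one active $C_{1i}$. Hence the assumption fails, some $\Delta^*_j=0$, and by the initial-segment structure there is a finite $J$ with $\Delta^*_j=0$ for $j>J$.

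The main obstacle, and the step I would treat most carefully, is the bookkeeping of the infinitely many multipliers and the per-coordinate stationarity: one must justify that (\ref{g1}) holds componentwise at the optimum (guaranteed here by the LICQ invoked before the lemmas) and that the tail sums $\sum_{i\geq k}\mu^*_{1i}$ are well defined. Everything else is the clean interplay between strict monotonicity of the partial sums, which allows at most one active upper-capacity constraint, and the strict negativity of $P'_{e|1}$ in (\ref{d/dx P_e1}), which refuses to let any stationarity coordinate vanish.
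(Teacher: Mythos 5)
Your proof is correct at the same level of formal rigor as the paper's own treatment (both rely on componentwise KKT with infinitely many constraints, a caveat you rightly flag and the paper shares via its informal LICQ invocation), but it takes a genuinely different route. The paper also argues by contradiction assuming infinitely many positive increments, but it exploits the explicit Poisson form of $P_{e|1}$: summability under the capacity constraint forces $\Delta^*_j \to 0$, while the stationarity relation (\ref{mu_F1_F2}) linking consecutive coordinates (the factor $2$ inherited from the state probabilities $2^{-j}$) gives $e^{-\Delta^*_{j-1}}\frac{(M+\Delta^*_{j-1}+\lambda)^{T_\text{th}}}{T_\text{th}!}=\frac{e^{-\Delta^*_{j}}}{2}\frac{(M+\Delta^*_{j}+\lambda)^{T_\text{th}}}{T_\text{th}!}$, which in the limit yields $e^{-\Delta^*_{j-1}}<\frac{1}{2}$, i.e., $\Delta^*_{j-1}>\ln 2$, contradicting convergence to zero. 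You never touch the analytic form of $P_{e|1}$: you note that strict positivity of every increment makes all $C_{2i}$ inactive and leaves at most one capacity constraint $C_{1i}$ active (in fact feasibility forbids even one, since $\sum_{j=1}^{i_0}\Delta^*_j=B_M$ with $\Delta^*_{i_0+1}>0$ would violate $C_{1,i_0+1}$, so all $\mu^*_{1i}$ vanish and stationarity fails outright), while the componentwise identity $\sum_{i\ge k}\mu^*_{1i}=2^{-k}\bigl(-\frac{d}{dx}P_{e|1}(x)\big|_{M+\Delta^*_k}\bigr)>0$ from (\ref{d/dx P_e1}) demands active constraints at arbitrarily large indices. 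What each approach buys: yours is more elementary and more general---it needs only strict monotonicity of the error curve, so it would survive any channel model with $\frac{d}{dx}P_{e|1}<0$, and it derives the finite-support structure directly from complementary slackness rather than verifying the ansatz (\ref{positive_sum}); the paper's computation, by contrast, yields quantitative information---the last positive increment is bounded below by roughly $\ln 2$, which is of a piece with the interval localization $\Delta^*_i\in[a_{i+1},a_i]$ of Lemma \ref{lem_3} and feeds the bound (\ref{Upper_Bound_num_deltas}) on the number of increments, none of which your argument recovers.
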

\begin{remark}
The above \emph{lemma} concludes that we only need to change the release durations for a finite number of states. Hence, the transmitter needs a limited memory for saving the optimal values of these release duration increments. An upper bound on the number of release duration increments is derived in Section~\ref{Bounds for term numbers}.
\end{remark}

Note that from (\ref{d/dx P_e1}), $P_{e|1}$ is a decreasing function. Now, let $M+a_J$ be the point where $\frac{d}{dx}P_{e|1}(.)$ reaches half of its value at $x=M$ and similarly $M+a_i \text{ for }i=1,2,\cdots,J+1$ be the point where
\begin{align}
\frac{d}{dx}P_{e|1}(x)\Big|_{M+a_i}=\frac{1}{2^{J-i+1}}\frac{d}{dx}P_{e|1}(x)\Big |_{M}. \label{min_ai s}
\end{align}
These points determine the boundaries of the optimal solution in the following \emph{lemma}.
\begin{lemma} \label{lem_3}
	The optimal $\Delta^*_i $ belongs to the interval $ [a_{i+1},a_i]$ for $i=1,2,\cdots$. 
\end{lemma}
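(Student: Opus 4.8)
The plan is to read off the location of each $\Delta_k^*$ directly from the coordinatewise KKT stationarity (\ref{g1}). Writing $g(x):=\frac{d}{dx}P_{e|1}(x)$, relation (\ref{d/dx P_e1}) shows $g<0$ everywhere, and since $T_{\text{th}}<M\le x$ one checks $|g|$ is strictly decreasing, i.e. $g$ is strictly increasing on $[M,\infty)$. Differentiating the constraints gives $\partial C_{1i}/\partial\Delta_k=-1$ and $\partial C_{2i}/\partial\Delta_k=+1$ for $k\le i$ (and $0$ otherwise), while $\partial F/\partial\Delta_k=2^{-k}g(M+\Delta_k)$. Hence the $k$-th component of (\ref{g1}) reads
\[
\tfrac{1}{2^{k}}\,g(M+\Delta_k^*)=\sum_{i\ge k}\bigl(\mu_{2i}^*-\mu_{1i}^*\bigr).
\]

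First I would discard the positivity multipliers. Since $B_M=\beta T>0$ and $P_{e|1}$ is strictly decreasing, the all-zero increment vector cannot be optimal, so Lemma \ref{lem_1} forces $\Delta_1^*>0$; then every partial sum $\sum_{j\le i}\Delta_j^*$ is positive, no constraint $C_{2i}$ is active, and complementary slackness yields $\mu_{2i}^*=0$ for all $i$. Setting $A_k:=\sum_{i\ge k}\mu_{1i}^*\ge0$, the displayed identity collapses to $g(M+\Delta_k^*)=-2^{k}A_k$.

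The heart of the argument is to pin down the active capacity constraints. Let $J$ be the largest index with $\Delta_J^*>0$; it is finite by Lemma \ref{lem_2} and at least $1$ by the previous step, and Lemma \ref{lem_1} gives $\Delta_j^*>0$ for $j\le J$ and $\Delta_j^*=0$ for $j>J$. Evaluating the identity at $k=J+1$ (where $\Delta_{J+1}^*=0$) gives $A_{J+1}=|g(M)|/2^{J+1}>0$, so some $C_{1i}$ with $i>J$ is active and the budget binds: $\sum_j\Delta_j^*=B_M$. For $k<J$ the partial sum $\sum_{j\le k}\Delta_j^*$ is strictly below $B_M$ (because $\Delta_{k+1}^*,\dots,\Delta_J^*>0$), so $C_{1k}$ is inactive and $\mu_{1k}^*=0$. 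Therefore $A_k$ takes one common value $A_*:=\mu_{1J}^*+|g(M)|/2^{J+1}$ for every $k\le J$, and stationarity becomes $g(M+\Delta_k^*)=-2^{k}A_*$.

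Finally I would conclude through the monotonicity of $g$. Using the defining relation $g(M+a_i)=2^{-(J-i+1)}g(M)$ from (\ref{min_ai s}) and the fact that $g$ is increasing, the assertion $\Delta_k^*\in[a_{k+1},a_k]$ is equivalent to the single scalar sandwich $|g(M)|/2^{J+1}\le A_*\le|g(M)|/2^{J}$. The lower bound is immediate from $\mu_{1J}^*\ge0$; the upper bound follows since $\Delta_J^*>0$ gives $g(M+\Delta_J^*)>g(M)$, i.e. $-2^{J}A_*>g(M)$, whence $A_*<|g(M)|/2^{J}$. This places $\Delta_k^*\in[a_{k+1},a_k]$ for all $k\le J$, while for $k>J$ the value $\Delta_k^*=0=a_{J+1}$ sits in the degenerate interval. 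The step I expect to be the main obstacle is the infinite-dimensional active-set bookkeeping: simultaneously showing that the budget binds, that all \textit{interior} multipliers $\mu_{1k}^*$ $(k<J)$ vanish, and that the tail sums $A_k$ therefore collapse to the single scalar $A_*$. Once that reduction is in place, both inequalities drop out of $\mu_{1J}^*\ge0$ and $\Delta_J^*>0$.
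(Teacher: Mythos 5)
Your proof is correct, but it reaches the lemma by a genuinely different route than the paper. The paper anchors only the upper endpoint, $\Delta^*_J\le a_J$, by a variational exchange argument: assuming $\Delta^*_J>a_J$ (scheme A), it splits $\Delta^*_J=b_1+b_2$ across states $s_J$ and $s_{J+1}$, with $b_1,b_2$ produced by Bolzano's theorem so that $\frac{d}{dx}P_{e|1}(x)\big|_{M+b_1}=2\frac{d}{dx}P_{e|1}(x)\big|_{M+b_2}$, and shows via convexity that the split scheme B has strictly smaller error; the remaining intervals $\Delta^*_j\in[a_{j+1},a_j]$ then follow by propagating the gradient-ratio identity (\ref{mu_F1_F2}). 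You instead do the full multiplier bookkeeping: kill the $\mu^*_{2i}$ via $\Delta^*_1>0$, read off the tail sum $A_{J+1}=|g(M)|/2^{J+1}$ by evaluating stationarity at the coordinate $k=J+1$ where $\Delta^*_{J+1}=0$, collapse $A_k$ to a single scalar $A_*$ for $k\le J$, and obtain both endpoints at once from the sandwich $|g(M)|/2^{J+1}\le A_*<|g(M)|/2^{J}$, the two inequalities coming from $\mu^*_{1J}\ge0$ and $\Delta^*_J>0$ respectively. Your version buys three things: it is shorter (no Bolzano construction, no two-scheme comparison), it delivers the strict inequality $\Delta^*_k>a_{k+1}$ that the paper later invokes in Section~\ref{Upper_Lower bounds for Pe}, and its tail accounting is actually more consistent than the paper's own ansatz in Appendix~\ref{Appx_Proof_lem_property1} --- there, setting $\mu^*_{1i}=0$ for $i>J$ makes the components of (\ref{g2}) vanish beyond index $J$, whereas $\partial F/\partial\Delta_k=2^{-k}g(M)\ne0$ at those coordinates; your analysis shows the tail constraints $C_{1i}$, $i>J$, are all active with $\mu^*_{1i}=2^{-(i+1)}|g(M)|>0$, which restores stationarity. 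What the paper's exchange argument buys in return is robustness: it certifies suboptimality of $\Delta^*_J>a_J$ by an explicit feasible perturbation, without leaning on the KKT system at inactive coordinates. The one caveat you correctly flag --- rigor of KKT with infinitely many variables and constraints --- applies equally to the paper's own treatment, so it is not a gap relative to it.
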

\begin{remark}
 	The above property helps us to provide upper and lower bounds on the minimum probability of error (see Section~\ref{Upper_Lower bounds for Pe}).
\end{remark}
\begin{remark} \label{remark:min Pe_ opt delta & fixed th}
	Substituting the optimal $\Delta^*_i$s in (\ref{Pe0_Pe1_dTi}) gives us the minimum probability of error for a fixed threshold receiver. 
\end{remark}


\subsection{Performance bounds}

In this section, we first derive the upper and lower bounds on the probability of error, and then we characterize an upper bound on the number of release duration increments.

\subsubsection{Upper and lower bounds on the minimum probability of error} 
\label{Upper_Lower bounds for Pe}

We provide upper and lower bounds on the probability of error using \emph{Lemma} \ref{lem_3}. Let $J$ be the number of positive release duration increments. Recall that each optimal $\Delta^*_i \in [a_{i+1},a_i]$. Using this and (\ref{positive_sum}), we obtain
\begin{align}
\sum_{j=2}^{J+1}a_j \leq \beta (T-T_M) \leq \sum_{j=1}^{J}a_j. \nonumber
\end{align}
First, we derive an upper bound.	
From (\ref{Pe0_Pe1_dTi}), we know that $P_{e|0}$ does not depend on $\Delta_i$, and from (\ref{d/dx P_e1}) we note that $P_{e|1}$ is a decreasing function of the number of transmitted molecules. Based on \emph{Lemma}~\ref{lem_3}, we have $\Delta^*_i>a_{i+1}$ (see Fig. \ref{ai_ai+1} for $n=3$). Thus, substituting $\Delta_i=a_{i+1}$, instead of $\Delta_i=\Delta^*_i$ in (\ref{Pe0_Pe1_dTi}) provides an upper bound on the minimum probability of error.
\begin{figure}
	\centering
	\includegraphics[trim={2cm 3cm 5cm 1.4cm},clip, scale=0.3]{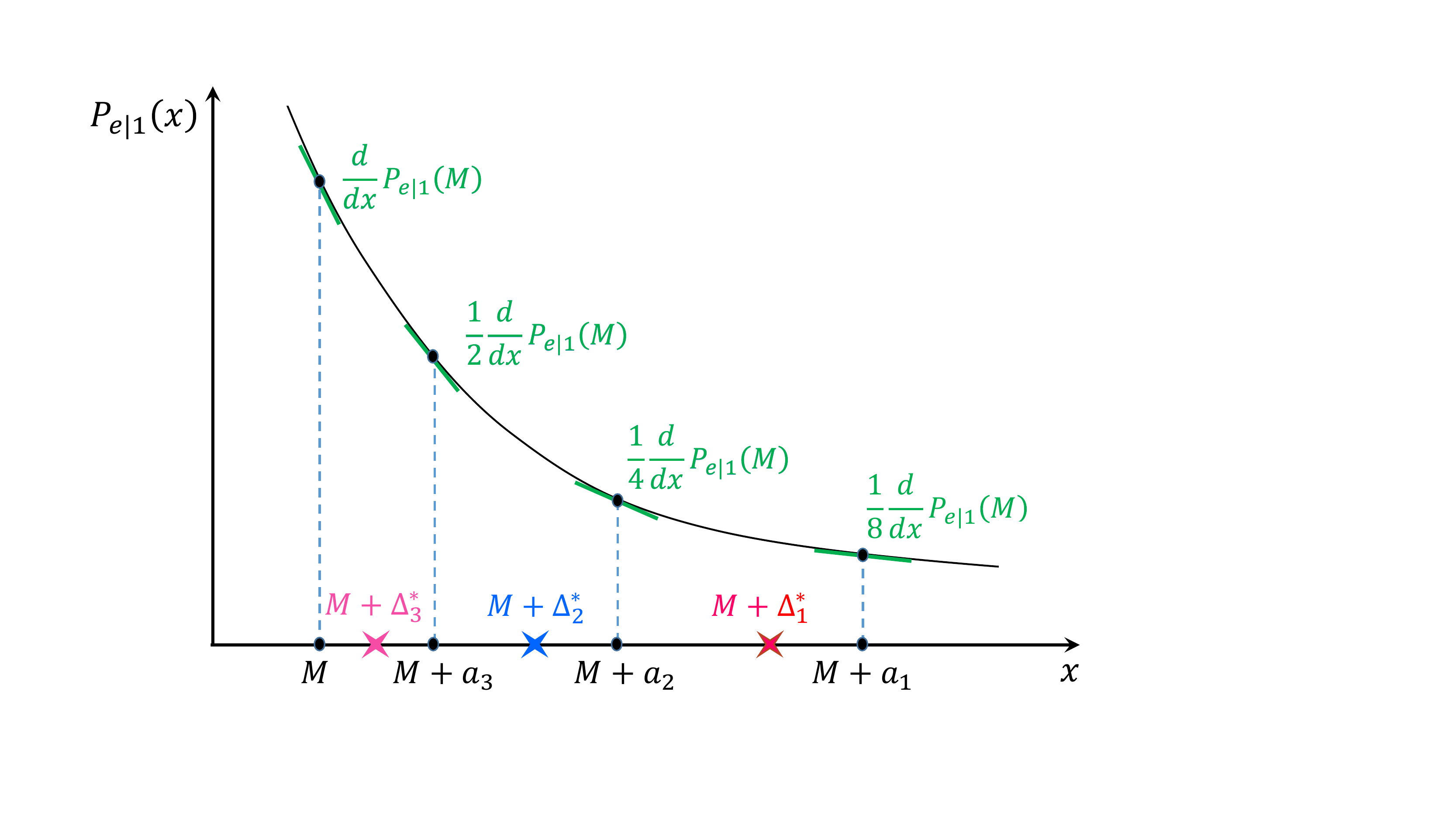}
	\caption{Intervals of optimal number of transmitted molecules in different states}
\label{ai_ai+1}
	\vspace{-0.5em}
\end{figure}
We note that $\Delta^*_i<a_{i}$. Similarly we obtain a lower bound on the minimum probability of error by substituting $\Delta_i=a_{i}$, instead of $\Delta_i=\Delta^*_i$ in (\ref{Pe0_Pe1_dTi}).


\subsubsection{An upper bound on the number of positive release duration increments}
 \label{Bounds for term numbers}
 
As the required memory at the transmitter is proportional to the number of positive release duration increments, we are interested in bounding it. 
\begin{theorem}
	The number of positive release duration increments is upper bounded as 
	\begin{align}
	N<\min_{m \in \{1,\cdots,J\}} \left\{m+\frac{B_M}{a_{J-m+1}} \right\}. \label{Upper_Bound_num_deltas}
	\end{align}
\end{theorem}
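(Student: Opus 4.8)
The plan is to bound $N$, which equals the cutoff index $J$ of \emph{Lemma}~\ref{lem_2} (since the nonnegative, decreasing sequence of \emph{Lemma}~\ref{lem_1} has exactly the $J$ positive entries counted by $N$), by combining three ingredients: the active storage constraint fixing the total increment, the per-state lower bounds of \emph{Lemma}~\ref{lem_3}, and the monotonicity of the threshold sequence $\{a_i\}$. First I would record that at the optimum the cumulative constraint (\ref{C_{1i}_no isi}) is tight at $i=J$, so that $\sum_{j=1}^{J}\Delta^*_j=B_M$ (this is (\ref{positive_sum})). Next I would establish that $\{a_i\}$ is decreasing with $a_{J+1}=0$: setting $i=J+1$ in (\ref{min_ai s}) gives $\frac{d}{dx}P_{e|1}|_{M+a_{J+1}}=\frac{d}{dx}P_{e|1}|_{M}$, hence $a_{J+1}=0$; and since the magnitude of (\ref{d/dx P_e1}) is strictly decreasing for $x\ge M$ (because $M+\lambda>T_{\text{th}}$), a smaller prescribed magnitude $\frac{1}{2^{J-i+1}}|\frac{d}{dx}P_{e|1}|_M$ forces a larger $a_i$, so $a_1\ge a_2\ge\cdots\ge a_J> a_{J+1}=0$. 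In particular $a_{J-m+1}>0$ for every $m\in\{1,\dots,J\}$, which is what makes the division in (\ref{Upper_Bound_num_deltas}) legitimate.

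The core estimate then proceeds by a term-by-term split. Fix $m\in\{1,\dots,J\}$ and write
\begin{align}
B_M=\sum_{j=1}^{J}\Delta^*_j=\sum_{j=1}^{J-m}\Delta^*_j+\sum_{j=J-m+1}^{J}\Delta^*_j. \nonumber
\end{align}
For the first group I would invoke the lower bound of \emph{Lemma}~\ref{lem_3}, $\Delta^*_j\ge a_{j+1}$, together with monotonicity of $\{a_i\}$: for every $j\le J-m$ we have $j+1\le J-m+1$, hence $a_{j+1}\ge a_{J-m+1}$ and therefore $\Delta^*_j\ge a_{J-m+1}$, giving $\sum_{j=1}^{J-m}\Delta^*_j\ge (J-m)\,a_{J-m+1}$. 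For the second group, the $m\ge 1$ indices $j=J-m+1,\dots,J$ are all at most $J$, so each $\Delta^*_j$ is strictly positive and the group sums to something $>0$.

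Putting these together yields the strict inequality $B_M>(J-m)\,a_{J-m+1}$; dividing by $a_{J-m+1}>0$ gives $J-m<B_M/a_{J-m+1}$, i.e. $N=J<m+\frac{B_M}{a_{J-m+1}}$, and minimizing over $m\in\{1,\dots,J\}$ establishes (\ref{Upper_Bound_num_deltas}). The estimate itself is short once \emph{Lemma}~\ref{lem_3} is in hand; the step I would guard most carefully is the \emph{strictness}, which is the whole reason the bound is $<$ rather than $\le$. It rests entirely on the dropped $m$ increments being genuinely positive, which is precisely what it means for them to be counted among the $N$ positive increments. The monotonicity of $\{a_i\}$ and the active-constraint identity $\sum_{j=1}^{J}\Delta^*_j=B_M$ are the two other places where I would be careful to invoke the hypotheses of \emph{Lemmas}~\ref{lem_1}--\ref{lem_3} rather than assume them.
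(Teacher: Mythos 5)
Your proof is correct and takes essentially the same route as the paper's: both make the storage constraint tight at $i=J$ (so $\sum_{j=1}^{J}\Delta^*_j=B_M$), lower-bound the first $J-m$ increments by $a_{J-m+1}$, obtain strictness from the positivity of the dropped $m$ increments, and divide. The only cosmetic difference is that you reach the per-term bound via \emph{Lemma}~\ref{lem_3} together with the monotonicity of $\{a_i\}$ (which you usefully make explicit, along with $a_{J+1}=0$), whereas the paper passes through $\Delta^*_{J-m}$ using \emph{Lemma}~\ref{lem_1} and then applies \emph{Lemma}~\ref{lem_3} once.
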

\begin{proof}
\emph{Lemma} \ref{lem_1} shows that release duration increments are decreasing. Remind that $C_{1J}$ is an active condition of (\ref{C_{1i}_no isi}) (see Appendix \ref{Appx_Proof_lem_property1}). Thus, from (\ref{C_{1i}_no isi}) and \emph{Lemma} \ref{lem_1}, we have
 \begin{align}
 \left\{
 \begin{array}{rl}
 &\sum_{i=1}^{J} \Delta^*_i=B_M,\\
 &\Delta^*_1>\Delta^*_2>\cdots>\Delta^*_{J-1}>\Delta^*_J>0,
 \end{array} \right. \nonumber
 \end{align}
which results in
$ J<1+\frac{B_M}{\Delta^*_{J-1}}. \nonumber$
 By following a similar approach and considering $\sum_{i=1}^{J-m} \Delta^*_i$ from $C_{1J}$, we obtain
 \begin{align}
N<m+\frac{B_M}{\Delta^*_{J-m}} <m+\frac{B_M}{a_{J-m+1}}. \nonumber
 \end{align}
Note that $\Delta^*_{J-m}>a_{J-m+1}$. Further, from (\ref{min_ai s}), the following equation can be easily obtained.
 \begin{align}
 a_i=-T_\text{th}\text{W}\Bigg(\frac{-\sqrt[T_\text{th}]{\frac{-(T_\text{th}!)}{2^{J-i+1}}\frac{d}{dx}P_{e|1}(x)\Big  |_{M}}}{T_\text{th}}\Bigg)-\lambda-M, \label{ai_closed_formula}
 \end{align}
where $\text{W}(\cdot)$ is the Omega Function defined as $\text{W}(ze^z)=z$.
\end{proof}

\section{Main results in ISI case} \label{sec_main results ISI case}

In this section, we consider the problem in the presence of ISI. First, we assume one-symbol ISI and find the optimal solution for release duration increments to minimize the probability of error. Then, we consider two-symbol ISI.

\subsection{One-symbol ISI} \label{1 mem. - ISI}

If the ISI is one symbol, the received molecules at the receiver have a Poisson distribution whose parameter is a function of the current bit and the previous bit. So if a bit ``0" is transmitted, the ISI is erased and received signal is only a function of the current bit. As a result, if we know the current state, we know the previous bit and we can obtain the probability of error.
From (\ref{poiss_dist}), we have: $
Y_i \sim \text{Poisson}(p_0 X_i+p_1 X_{i-1}+\lambda). \nonumber
$
We are going to find the conditional probability of error.
If we use $P_{e|ji}$	for error probability where the current bit is ``$i$" and the previous bit is ``$j$", we have
\begin{align}
P_{e|B}=\frac{1}{2}(P_{e|0B}+P_{e|1B}), \text{ for } B=0, 1. \nonumber
\end{align}
The previous bit has two cases:\\
I. The previous bit is ``0":
	ISI is zero and we have
	\begin{align}
	P_{e|0B}=P_{e|B}(Bp_0(M+\Delta_1)), \text{ for } B=0, 1. \nonumber 
	\end{align}
II. The last bit is ``1":
	 ISI get different amounts based on the previous state. If the previous state is $s_{j-1}$, the released molecule increment is $\Delta_j$. Since the previous bit is ``1", the current state is $s_{j}$ and the current released molecule increment is $\Delta_{j+1}$. Thus, the average amount of $P_{e|1B}, B\in \{0,1\}$ is
	\begin{align}
	&P_{e|1B}=\sum_{j=1}^{\infty}P(s_{j-1})P_{e|s_{j-1}}\nonumber\\
	&=\sum_{j=1}^{\infty}\frac{1}{2^{j}}P_{e|B}\bigg(Bp_0(M+\Delta_{j+1})+p_1(M+\Delta_j)\bigg).\nonumber
	\end{align}

Our goal is finding $\Delta_i$s to minimize $P_e$.
\begin{align}
\min_{\{\Delta_i\}_{i=1}^\infty}&{P_e}=\frac{1}{2}(P_{e|0}+P_{e|1}), \label{min_pe_isi}\\
\text{s. t. : }
&C_{1i}(\{\Delta_i\}_{i=1}^\infty)= B_M-\sum_{j=1}^{i}\Delta_j\geq 0, \text{ }\forall i,  \label{C1i_isi} \\  
&C_{2i}(\{\Delta_i\}_{i=1}^\infty)= \sum_{j=1}^{i}\Delta_j \geq 0,\text{ }\forall i.    \label{C2i_isi}                   
\end{align}
In Appendix \ref{proof_H_positive semi definite}, for the general case of finite memory channel (\emph{i.e.}, there exists a finite number $K$ where $p_i=0 \text{, for }i>K+1$), we show that the objective function in (\ref{min_pe_isi}) is convex and therefore the optimal solution exists. As a result, this function has a minimum in the compact domain of $\Delta_i$s, which is the optimal solution.

The Lagrangian method results in a problem, which is difficult to solve in a closed form. To simplify this problem, we write the probability of error in (\ref{min_pe_isi}), as a function of ISI value in each state.

In no-ISI case, we defined the states as the number of transmitted bits ``1" after the last bit ``0". 	For the case of one symbol ISI, using the same definition of states, we can determine the value of ISI, $v_{i}$ for $i=1,2,\cdots$, in state $s_{i-1}$, as given in TABLE \ref{table_state_1ISI}. Therefore, the previous definition of states is used in one symbol ISI case, with the probabilities of states given in (\ref{state_probes}), \emph{i.e.}, ${P}(v_i)={P}(s_{i-1})=\frac{1}{2^i}$. As a result, in state $s_{i-1}$, we choose the number of released molecules, $l_i$, as a function of ISI value (\emph{i.e.}, $l_i=M+\Delta_i=f(v_i)$), similar to \cite{movahednasab2016adaptive}.
\begin{table} 
	\centering
	\caption{States and ISI values for one-symbol ISI case}
	\vspace{-0.5em}
	\begin{tabular}{|c|c|}
		\hline State & ISI value \\\hline
		${s}_0=0$ & $v_1=0$\\
		${s}_1=01$ & $v_2=p_1(M+\Delta_1)$\\
		${s}_2=011$ & $v_3=p_1(M+\Delta_2)$ \\
		${s}_3=0111$ & $v_4=p_1(M+\Delta_3)$ \\
		$\vdots$ & $\vdots $ \\
		\hline
	\end{tabular}
	\label{table_state_1ISI}
\vspace{-0.5em}
\end{table}
Thus, the probability of error can be written as
\begin{align}
P_e=\frac{1}{2}\mathbb{E}_I\Big[P_{e|0}(I) +P_{e|1}(p_0f(I)+I)\Big],\label{pe__isi}
\end{align}
where 
$\mathbb{E}_I[X]=\sum_{i=1}^{\infty} {P}(s_{i-1})(X|{I=v_i})$.
Now, we minimize the probability of error, subject to
\begin{align}
&C_{1i}(\{l_i\}_{i=1}^{\infty})=B_M-\sum_{j=1}^{i}(l_j-M)\geq 0,\text{ }\forall i,  \label{C1i_isi_b}  \\  
&C_{2i}(\{l_i\}_{i=1}^{\infty})= \sum_{j=1}^{i}(l_j-M) \geq 0 ,\text{ }\forall i. \label{C2i_isi_b}
\end{align}
To simplify the problem, similar to \cite[p. 247]{movahednasab2016adaptive}, we assume that $I$ is a random variable, which takes values $v_1,v_2,\cdots$ in each state.
Considering this assumption, the first term of (\ref{pe__isi}) is fixed and only the second term depends on $l_i$s. Therefore, we can simplify the optimization problem and obtain a sub-optimal solution by minimizing only the second term of (\ref{pe__isi}) as follows.
\begin{align}
\min_{\{l_i\}_{i=0}^{\infty}} &\sum_{j=1}^{\infty} \frac{1}{2^{j}} \sum_{y< T_\text{th} }^{}e^{-(p_0l_i+v_i+\lambda)}\frac{(p_0l_i+v_i+\lambda)^y}{y!},\label{min_pe_isi_b}
\end{align}
subject to (\ref{C1i_isi_b}) and (\ref{C2i_isi_b}).
Using Lagrangian method, at the optimal solution, we have
\begin{align}
\nabla\mathbb{E}_I[P_{e|1}(p_0l_i^*+v_i^*)]  =\sum_{k=1}^{2}\sum_{i=1}^{\infty}\mu^*_{ki} \nabla C_{ki}(\{l_i^*\}_{i=1}^{\infty}),
\label{kkt_isi_b}
\end{align}
where $\mu^*_{ki}$s are Lagrangian multipliers. Due to KKT conditions, we have $\mu^*_{ki} C_{ki}=0, \forall i=1,\cdots$.
In Appendix \ref{appx_sol_1bit_ISI}, we obtain a local minimum for (\ref{min_pe_isi_b}), which is a global minimum (because $P_{e|1}$ is a convex function and the domain of $l_i$s in conditions (\ref{C1i_isi_b}) and (\ref{C2i_isi_b}) is compact). Since $l^*_i=M+\Delta^*_i$, the $\Delta^*_i$s are obtained from (\ref{li*s_appx}) as
\begin{equation}
\Delta_i^*=\frac{1}{p_0}
\begin{cases}
p_1M+\Delta_{1, \text{no-ISI}}^*,\text{ }i=1 \\
\Delta_{i, \text{no-ISI}}^*-p_1\Delta_{i-1}^*,\text{ }2\leq \text{ }i\leq J \\
0, \text{ }i> J.
\end{cases} \label{subopt dT_i_mem1}
\end{equation}


\subsection{Two-symbol ISI} \label{2 mem. - ISI}

We know the storage is filled when a bit ``0" comes. 
We use different release duration for each bit ``1" after bit ``0". For example, if the number of transmitted ``1"s after the last ``0" is $i$ bits, the released molecule increment for the next bit ``1" is $\Delta_{i+1}$.
We update the definition of state diagram to match two-symbol ISI. We define the states as $0\underbrace{1\cdots1}_{i+1}$ or $0\underbrace{1\cdots1}_{i}0$, where $i=0,1,\cdots$, as shown in Fig. \ref{f12_markovchain_2mem}. These states show the sequence of previous transmitted bits (at least two bits) and include sufficient information about the ISI values. TABLE \ref{table1_state_2ISI} shows ISI values for these states. For example, if the system is in state $\tilde{s}_3=010$ and bit ``1" comes, ISI takes value of $p_1(M+\Delta_1)$ and the system goes to state $\tilde{s}_2 =01$.
 \begin{table} 
 	\centering
 	\caption{States and ISI values for two-symbol ISI case}
 	\vspace{-0.5em}
 	\begin{tabular}{|c|c|}
 		\hline State & ISI value \\\hline
 		$\tilde{s}_1=00$ & $v_1=0$\\
 		$\tilde{s}_2=01$ & $v_2=p_1(M+\Delta_1)$\\
 		$\tilde{s}_3=010$ & $v_3=p_2(M+\Delta_1)$ \\
		$\tilde{s}_4=011$ & $v_2=p_1(M+\Delta_2)+p_2(M+\Delta_1)$ \\
 		$\tilde{s}_5=0110$ & $v_5=p_2(M+\Delta_2)$ \\
 		$\tilde{s}_6=0111$ & $v_6=p_1(M+\Delta_3)+p_2(M+\Delta_2)$ \\
  		$\vdots$ & $\vdots $ \\
  		\hline
 	\end{tabular}
 	\label{table1_state_2ISI}
 \end{table}

The transition matrix of Fig. \ref{f12_markovchain_2mem} is $\tilde{\Pi} =[\pi_{ij}]$, where
\begin{equation}
\pi_{ij}=
\begin{cases}
1/2, \text{ }(i,j) \in C_\pi \nonumber\\  
	0, \text{ otherwise},        \nonumber
\end{cases} \nonumber
\end{equation}
and 
$C_\pi=\{(2k-1,1),(2k-1,2),(2k,2k+1),(2k,2k+2)|\forall k\in \mathbb{N}\}. \nonumber$
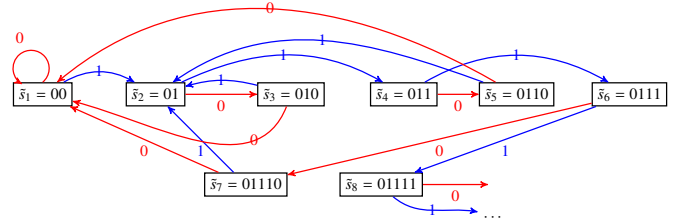
\begin{figure}
	\begin{center}
		\vspace{-1.8em}
		\scalebox{0.6}{%
			\begin{tikzpicture} [
			point/.style={coordinate},>=stealth',thick,draw=black,
			tip/.style={->,shorten >=0.007pt},
			text height=1.5ex,text depth=.25ex 
			]
			\path (2.5,6.5) node[draw,shape=rectangle] (A) {$\tilde{s}_1=00$};  
			\path (5,6.5) node[draw,shape=rectangle] (B) {$\tilde{s}_2=01$};
			\path (8,6.5) node[draw,shape=rectangle] (C) {$\tilde{s}_3=010$};
			\path (10.5,6.5) node[draw,shape=rectangle] (D) {$\tilde{s}_4=011$};  
			\path (13,6.5) node[draw,shape=rectangle] (E) {$\tilde{s}_5=0110$};
			\path (15.5,6.5) node[draw,shape=rectangle] (F) {$\tilde{s}_6=0111$};
			\path (7,4.5) node[draw,shape=rectangle] (G) {$\tilde{s}_7=01110$};
			\path (10,4.5) node[draw,shape=rectangle] (H) {$\tilde{s}_8=01111$};  
			\path (12.5,3.8)	node (V) {$\cdots$}; 
			\path (12.5,4.5)	node (W) {}; 
			\draw [red,->] (B) -- node [below, midway] {$0$} (C);
			\draw [red,->] (D) -- node [below, midway] {$0$} (E);
			\draw [red,->] (H) -- node [below, midway] {$0$} (W);
			\draw [red,->] (F) -- node [below, midway] {$0$} (G);
			\draw [blue,->] (F) -- node [below, midway] {$1$} (H);
			\draw [blue,->] (G) -- node [below, midway] {$1$} (B);
			\draw [red,->] (G) -- node [below, midway] {$0$} (A);
			\draw [blue,->, bend left] (D) [out=30 , in=150] to node {$1$} (F);
			\draw [blue,->, bend left] (C) [out=345 , in=195] to node {$1$} (B);
			\draw [blue,->, bend left] (E) [out=340 , in=215] to node {$1$} (B);
			\draw [red,->, bend left] (E) [out=330 , in=220] to node {$0$} (A);
			\draw [blue,->, bend left] (A) [out=30 , in=150] to node {$1$} (B);
			\draw [blue,->, bend left] (H) [out=330 , in=181] to node {$1$} (V);
			\draw [blue,->, bend left] (B) [out=25 , in=150] to node {$1$} (D);
			\draw [red,->, bend left] (C) [out=70 , in=167] to node {$\text{ }\text{ }\text{ }\text{ }\text{ }\text{ }\text{ }\text{ }\text{ }\text{ }\text{ }\text{ }\text{ }\text{ }\text{ }\text{ }\text{ }\text{ }\text{ }\text{ }0$} (A);
			\draw [red,->] (A.90) arc (-50:240:4mm) node[pos=0.5,above left] {$0$} (A);
			\end{tikzpicture}
		}
	\vspace{-2em}
	\end{center}
	\caption{{Transition diagram of two-symbol ISI}} 
	\label{f12_markovchain_2mem}
\end{figure}
The steady-state distribution $P_{\tilde{s}}=[P_{\tilde{s}_1},P_{\tilde{s}_2},\cdots]$ is derived as $P_{\tilde{s}}=P_{\tilde{s}}\tilde{\Pi}$, which results in
\begin{align}
P_{\tilde{s}}=[{1}/{4},{1}/{4},{1}/{8},{1}/{8},{1}/{16},{1}/{16},\cdots].  
\label{state_dist_2ISI}
\end{align}

As shown in Fig. \ref{f12_markovchain_2mem}, in states $\tilde{s}_1,\tilde{s}_3,\tilde{s}_5,\cdots$, the previous transmitted bit is ``0"; Thus, the storage is full and we choose  released molecule increment $\Delta_1$ in these states (if a bit ``1" comes in these states, the released number of molecules are $l_1=l_3=l_5=\cdots=M+\Delta_1$). In states $\tilde{s}_2,\tilde{s}_4,\tilde{s}_6,\cdots$, the number of transmitted bits ``1" after the last bit ``0" are respectively $1,2,3,\cdots$; Thus, we choose released molecule increments $\Delta_2,\Delta_3,\Delta_4,\cdots$ in these states, respectively (\emph{i.e.}, $l_2=M+\Delta_2, l_4=M+\Delta_3,,l_6=M+\Delta_4,\cdots$).
Similar to the case of one-symbol ISI, again we assume that $I$
is a random variable, which takes values $v_1,v_2,\cdots$ in states $\tilde{s}_1,\tilde{s}_2,\cdots$ and released number of molecules in these states are a function of ISI (\emph{i.e.}, $l_i=f(v_i)$).
Using these assumptions, we can write the probability of error as (\ref{pe__isi}) and only minimize the second term (the first term is fixed) as:
\begin{align}
&\mathbb{E}_I[P_{e|1,I}]=\sum_{i=1}^{\infty} P_{\tilde{s}_i} P_{e|1}(p_0l_i+v_i),\nonumber
\\
&=\sum_{i=1}^{\infty} \left(P_{\tilde{s}_{2i-1}} P_{e|1}(p_0l_1+v_{2i-1})+P_{\tilde{s}_{2i}}  P_{e|1}(p_0l_{2i}+v_{2i})\right),
\label{pe_isi_2b}
\end{align}
to obtain a sub-optimal solution.
In Appendix \ref{appx_sol_2bit_ISI}, we obtain a local minimum (with optimal released numbers $\{l^*_i\}_{i=0}^{\infty}$), for (\ref{pe_isi_2b}), subject to the same conditions as in (\ref{C1i_isi_b}) and (\ref{C2i_isi_b}).
It is also a global minimum (because $P_{e|1}$ is a convex function and the domain of $l_i$s  is compact). Since $l^*_i=M+\Delta^*_i$, the $\Delta^*_i$s are obtained from (\ref{li*s_appx_2bit_isi}) as
\begin{equation}
\Delta_i^*=\frac{1}{p_0}
\begin{cases}
(p_1+\frac{p_2}{2})M+\Delta_{1, \text{no-ISI}}^*,\text{ } i=1, \\
p_2M+\Delta_{2, \text{no-ISI}}^*-p_1\Delta_1^*, \text{ }i=2,\\
\Delta_{i, \text{no-ISI}}^*-p_1\Delta_{i-1}^*-p_2\Delta_{i-2}^*, \text{ }2<i \leq J, \\
0,\text{ }i> J.
\end{cases} \label{subopt_dTi_mem2}
\end{equation}

\subsection{Adaptive threshold receiver in the presence of ISI}\label{Adaptive_th_isi}

So far, we consider a fixed threshold receiver and obtain the optimal release durations in each state. These optimal release durations can be used to update the threshold in each state. Similar to the channel without ISI, we use ML receiver and choose the best threshold in state $s_{j-1}$, as:
\begin{align}
\frac{P(y|s_{j-1})}{P(y|s_0)}  \underset{0}{\overset{1}\gtrless }1,   \nonumber
\end{align}
and thus, 
$
\Big(\frac{M+p_0\Delta_j+p_1\Delta_{j-1}+p_2\Delta_{j-2}+\lambda}{p_1(M+\Delta_{j-1})+p_2(M+\Delta_{j-2})+\lambda}\Big)^{T^j_\text{th}} \underset{0}{\overset{1}\gtrless } e^{p_0(M+\Delta_j)}. \nonumber
$
Therefore, threshold is
\begin{align}
T^j_\text{th} =\frac{p_0(M+\Delta_j)}{\text{ln}\Big(1+\frac{p_0(M+\Delta_j)}{p_1(M+\Delta_{j-1})+p_2(M+\Delta_{j-2})+\lambda}\Big)}. \label{Adap_Th_isi}
\end{align}
The receiver which save these adaptive thresholds in its memory and uses $T^j_\text{th}$ in state $s_{j-1}$, is called adaptive threshold receiver.


\section{Numerical Results} \label{sec_numerical results}

In this section, we provide simulations to show the performance improvement achieved by the proposed adaptive release duration scheme. 
We choose a transmitter with $\beta=2$molecule/sec, $T=25$sec and $T_M=4$sec, which are the molecule production rate, slot duration and non-adaptive release duration, respectively. We assume hitting probabilities for one-symbol ISI case as $p_0=0.9$ and $p_1=0.1$ and for two-symbol ISI case as $p_0=0.85$, $p_1=0.1$ and $p_2=0.05$.
First, we compare the performance of four different strategies in terms of probability of error for no-ISI case.
\\
$\bullet$ {Strategy 1 (non-adaptive strategy):} release duration is fixed and the receiver uses a fixed threshold (probability of error is derived in (\ref{Pe_fixed})).\\
$\bullet$ {Strategy 2:} 
release durations are optimal and the receiver uses a fixed threshold (probability of error is discussed in \emph{Remark}~\ref{remark:min Pe_ opt delta & fixed th}).\\
$\bullet$ {Strategy 3:}
 release durations are optimal and the receiver uses these optimal release durations to obtain adaptive thresholds numerically in one iteration such that $P_e$ decreases (probability of error is derived in Appendix \ref{Adaptive_th_no_isi},  (\ref{pe_adaptive_adaptive Th_1iter})).\\
$\bullet$ {Strategy 4:}
 release durations and thresholds are optimized simultaneously, \emph{i.e.}, in several iterations such that the probability of error converges (probability of error is discussed in Appendix \ref{Adaptive_th_no_isi}, \emph{Remark} \ref{remark:min Pe_opt delta & opt th _ simmultaneously}).
\begin{figure}
	\centering
	\includegraphics[trim={0cm 0.2cm 0cm 1cm},clip,scale=0.33]{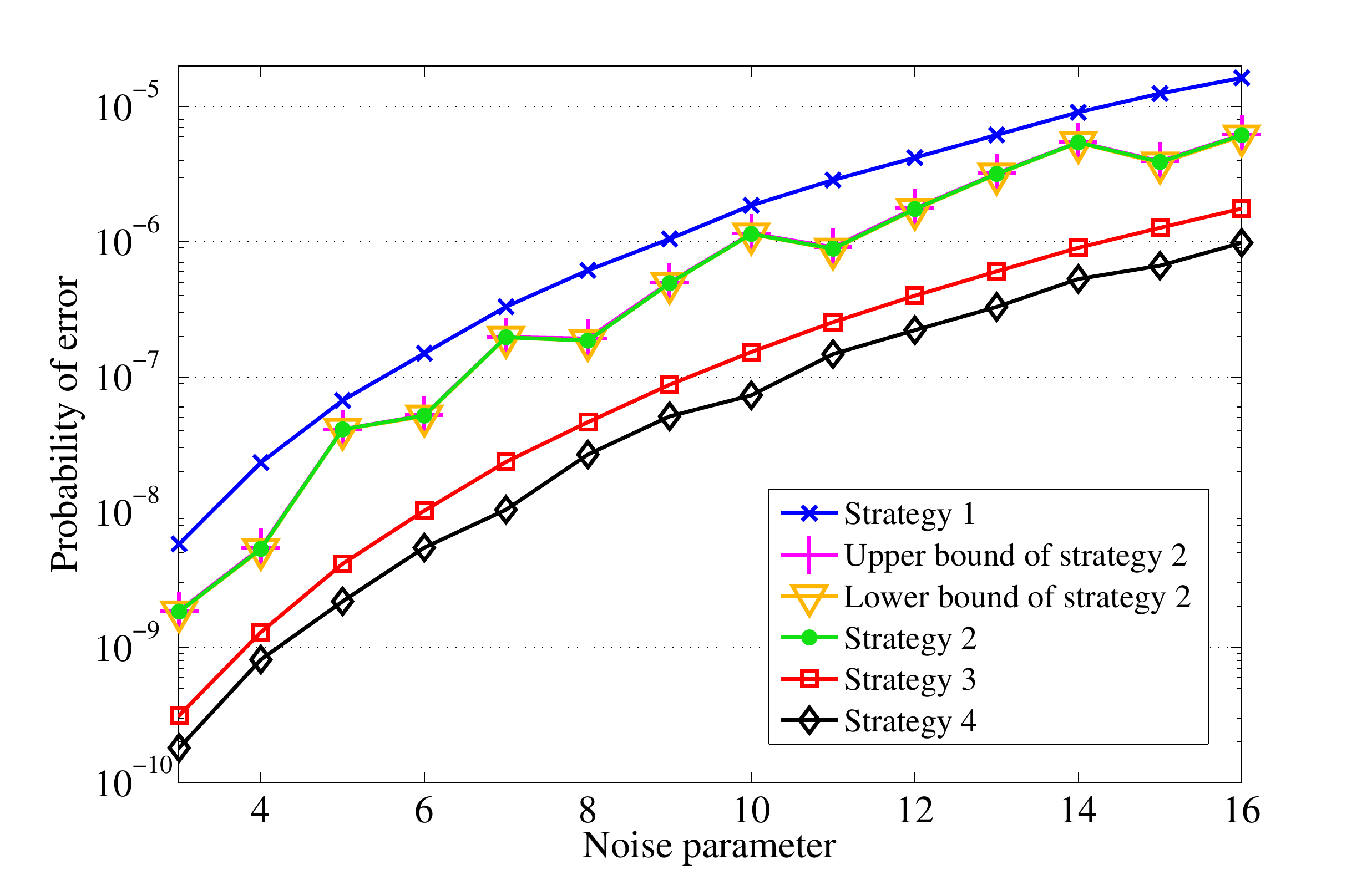}
	\caption{Error probability versus noise parameter for no-ISI case}
	\label{figs_no_isi_method3_Ln}
	\vspace{0em}
\end{figure}
\begin{figure}
	\centering
	\includegraphics[trim={0cm 0.2cm 0cm 1cm},clip,scale=0.33]{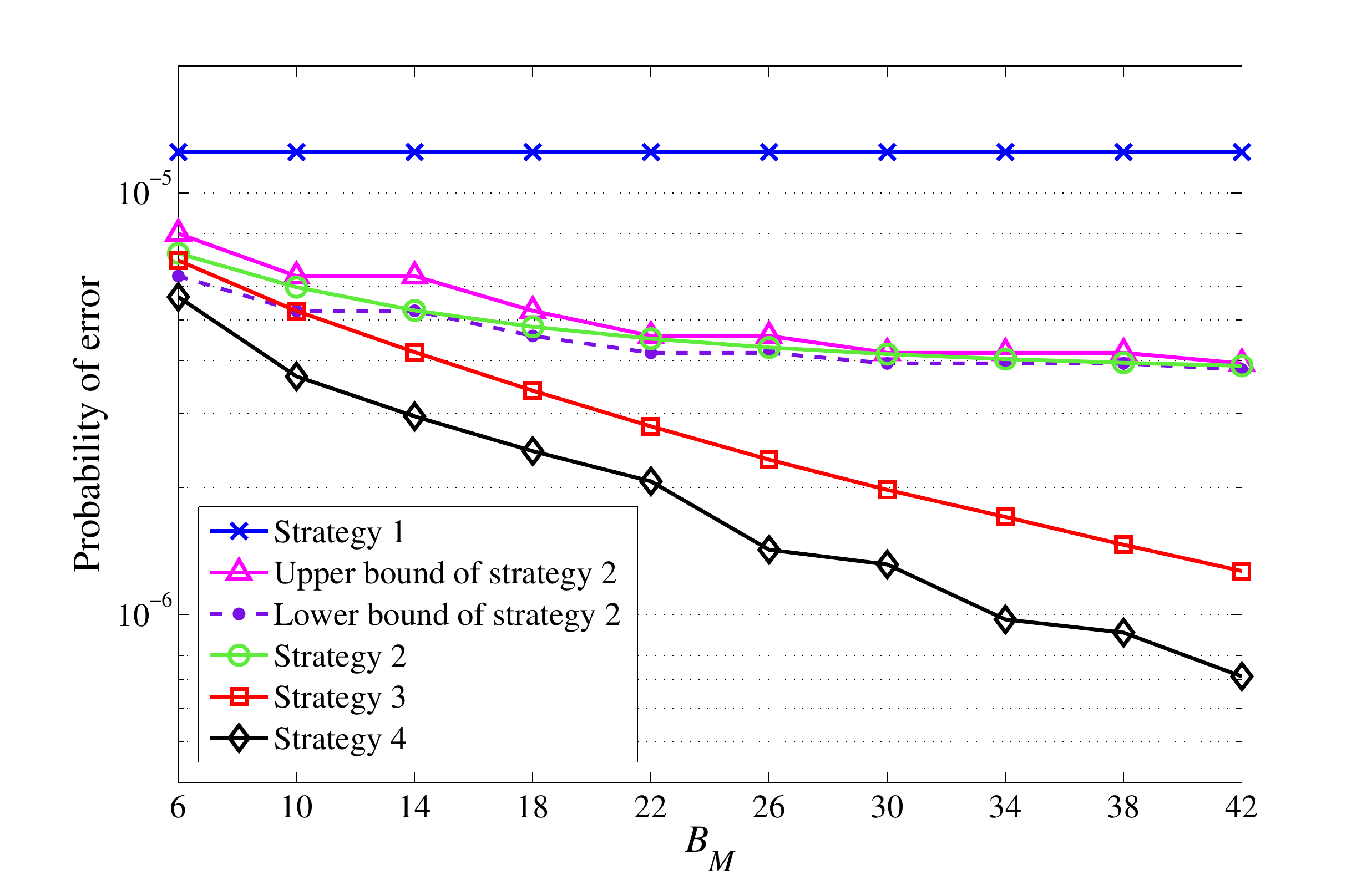}
	\caption{Error probability versus storage capacity for no-ISI case with $\lambda=15\text{ molecules}$ }
	\label{figs_no_isi_method3_BM}
	\vspace{-1em}
\end{figure}

Fig. \ref{figs_no_isi_method3_Ln} shows the probability of error versus the noise parameter for the mentioned strategies. Since finding optimal release duration increments needs to solve the optimization problem in (\ref{minF}), upper and lower bounds in Section~\ref{Upper_Lower bounds for Pe} respectively are plotted without solving this optimization problem and only by using the interval of optimal solution. It can be seen that these two bounds are very close to each other.
As one can see, strategy 1, which is the simplest technique for transmission and reception, can be improved by changing the release durations and/or thresholds. Though the complexity of the system increases compared to non-adaptive strategy, we have shown that the transmitter and receiver need only a limited memory to save the amounts of release duration increments and adaptive thresholds.
For example, there is nine positive optimal release duration increments that the transmitter save these numbers in its memory. Also the number of adaptive threshold that the receiver save in its memory is nine numbers. 
  
Fig. \ref{figs_no_isi_method3_BM} shows the probability of error versus the storage capacity for the above  four strategies. As one can see, in strategy 1, the storage capacity has no effect on the system performance while for three other methods, increasing storage capacity improves the system performance. As shown in Fig.~\ref{figs_no_isi_method3_BM}, for strategies 3 and 4 (which we change both release durations and thresholds), with increasing storage capacity, $P_e$ decreases exponentially. This figure shows that changing the release durations and thresholds (strategies 3 and 4) significantly improves the probability of error.

In Fig. \ref{figs_no_isi_Upper_Bound_number_of_deltas_BM}, we show the upper bound on the number of positive release duration increments derived in (\ref{Upper_Bound_num_deltas}) versus the storage capacity for noise parameters $\lambda=3,7,11,15$. 
For a fixed $B_M$, we observe that by increasing $\lambda$, the upper bound decreases. The reason is due to the fact that in higher noise regimes ($\lambda=15$), the number of required molecules is high, which translates to larger release duration; Therefore, there is less time remained to increase the release durations for adaptive strategy.
Similarly for a larger storage capacity, the stored molecules are more and we need smaller release duration ($T_M$) in non-adaptive strategy. Thus, we have more freedom to allocate the remaining time (\emph{i.e.}, $T-T_M$) for adaptive strategy.
we can use this time in more states and thus the number of positive optimal release duration increments increases.
\begin{figure}
	\centering
	\includegraphics[trim={0cm 0.1cm 0cm 1cm},clip,scale=0.33]{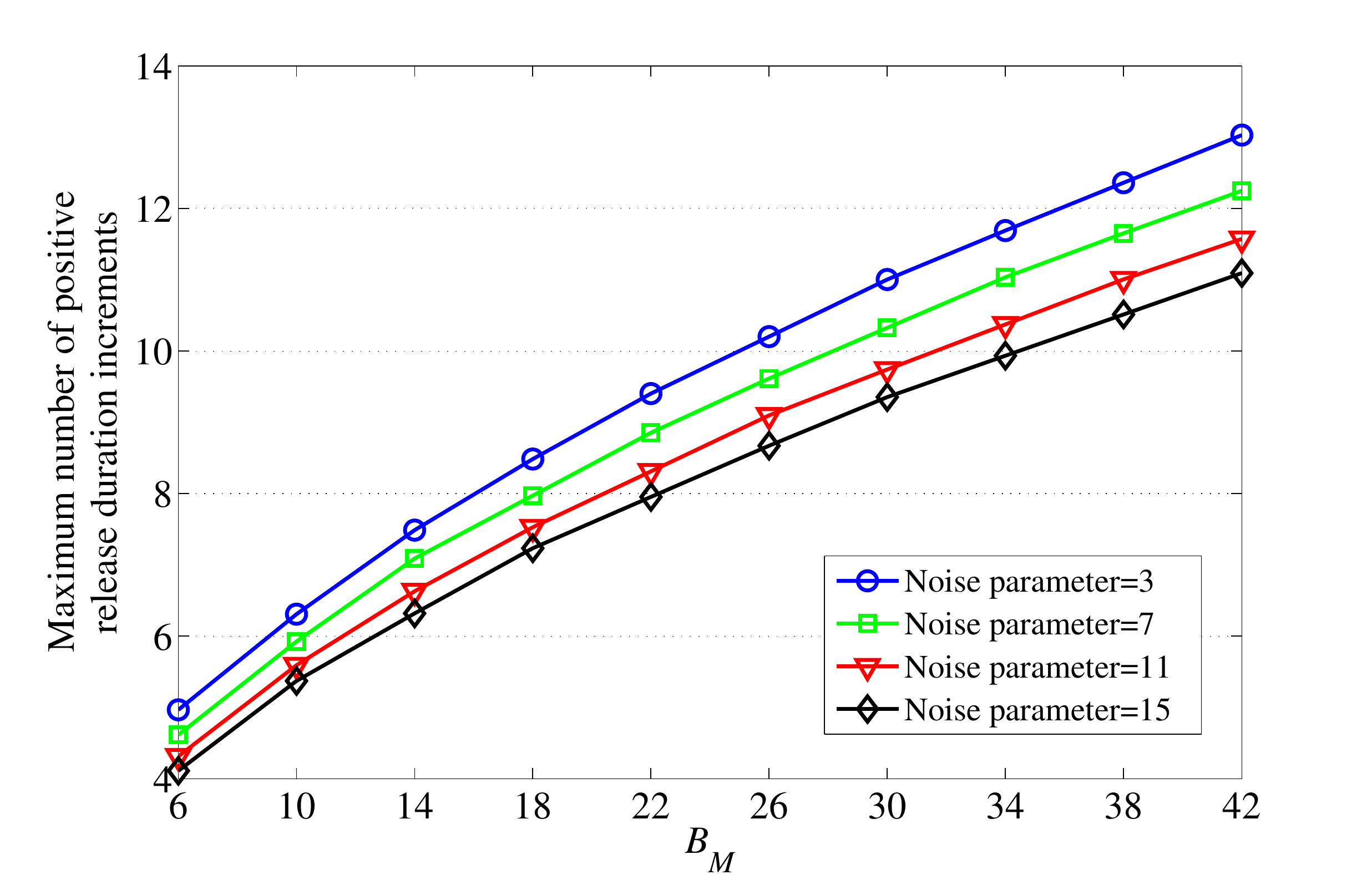}
	\caption{Upper bounds on the number of positive release duration increments versus storage capacity}
	\label{figs_no_isi_Upper_Bound_number_of_deltas_BM}
	\vspace{-1em}
\end{figure}

For the channel with ISI, a method is proposed in \cite{movahednasab2016adaptive} for a receiver with a fixed threshold while the transmitter has no limitation. It shows that the optimal transmission rates are chosen such that the received signal for bit ``1" has a fixed rate for almost all values of ISI. For a transmitter with limited molecule production rate and storage capacity, when ISI is zero ($I=0$), the received rate cannot be more than $p_0M$. Thus, the maximum fixed rate of molecules that can be received is
\begin{align}
p_0X_i+I=p_0M. \label{fixed_rate_movahed}
\end{align}
 We simplify this method for our transmitter with limitations in one-symbol ISI case and find a closed form for the transmission rates (see Appendix \ref{movahed_1mem}).
Now, we compare strategies 1, 4 and the following strategies:\\
$\bullet$ {Strategy 5:} The method proposed in \cite{movahednasab2016adaptive}; release durations are such that the received signal has a fixed rate and the receiver uses a fixed threshold.\\
$\bullet$ {Strategy 6 (sub-optimum strategy):} release durations are sub-optimum as in (\ref{subopt dT_i_mem1}) and (\ref{subopt_dTi_mem2}) for one and two-bit ISI, respectively. The receiver uses adaptive thresholds in (\ref{Adap_Th_isi}).\\

Fig. \ref{f_shabih_isi_1mem} shows the probability of error versus the noise parameter for these four strategies in a channel with one-symbol ISI. For strategy 5, we plot the upper and lower bounds on the probability of error (derived in (\ref{movahed_upper B_1mem}) and (\ref{movahed_lower B_1mem}), respectively). It can be seen that strategy 1 is better than strategy 5. Because, strategy 5 forces the transmission scheme to have a fixed rate in (\ref{fixed_rate_movahed}) at the receiver. Thus, it forces the transmitter to decrease release rate (\emph{i.e.}, $X_i<M$) when ISI is not zero. It means that the transmitter uses less number of molecules to send bit ``1" and thus the probability of error increases. Our proposed strategy in both optimum (strategy 4) and sub-optimum cases (strategy 6) significantly outperform strategies 1 and 5, because strategy 5 is not proposed for a system with transmitter's limitations (limited molecule production rate and storage capacity). Therefore, considering these limitations in designing the modulation scheme, can significantly improve the performance. It can be observed that the sub-optimum strategy 6 converges to the optimum strategy 4 as noise increases.
\begin{figure}
	\centering
	\includegraphics[trim={0cm 0.2cm 0cm 0.5cm},clip,scale=0.33]{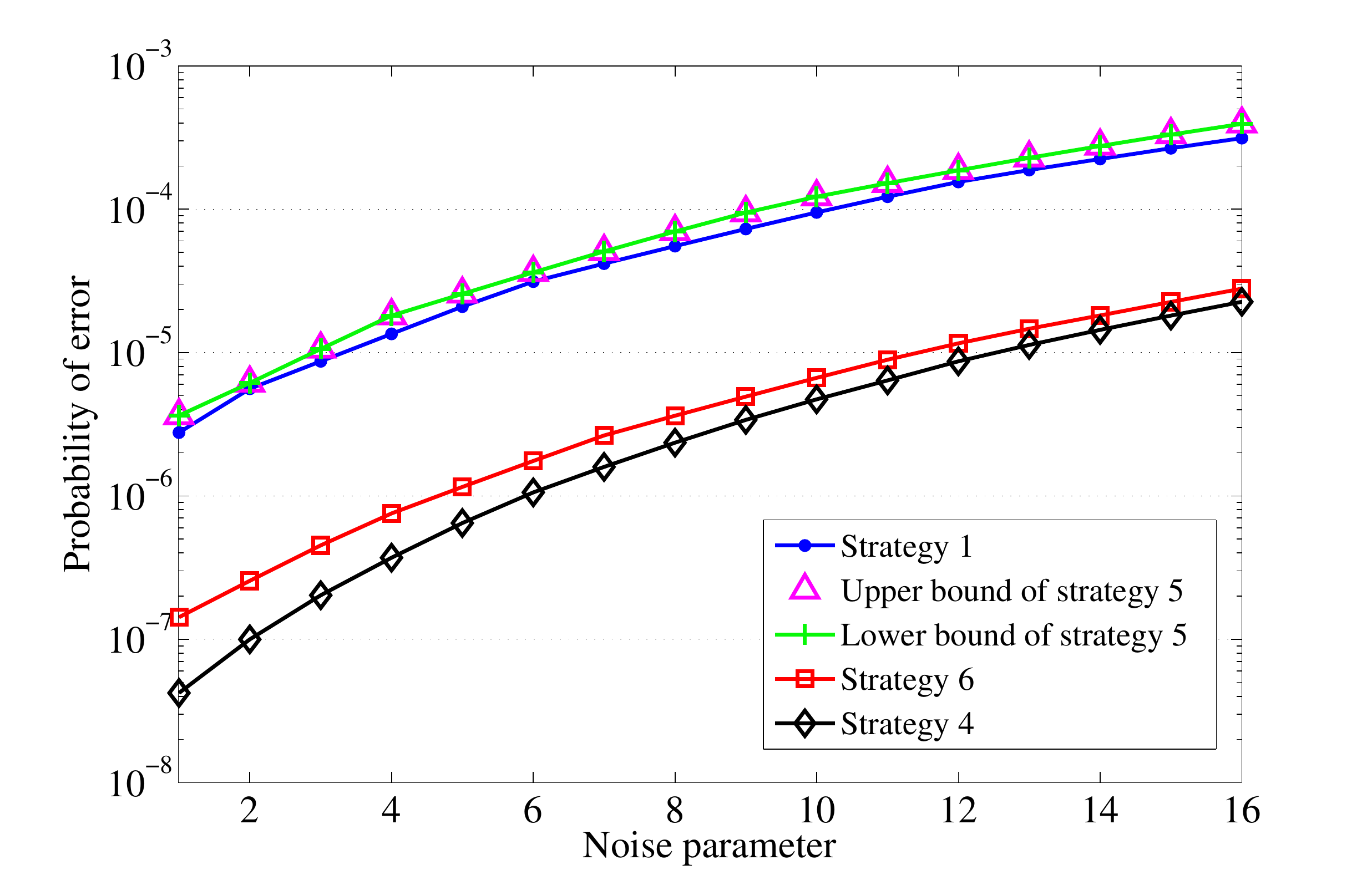}
	\caption{Error probability versus noise parameter for one-symbol ISI case}
	\label{f_shabih_isi_1mem}
	\vspace{-1em}
\end{figure}

In case of two-symbol ISI, similar results can be seen in  Fig. \ref{f_shabih_isi_2mem}, where the error probability of strategy 5 is plotted via simulations because there is not a closed form for the number of released molecules. We compare strategies 1 and 5 with our optimum (strategy 4) and sub-optimum (strategy 6) strategies. We see that our strategies improve the system performance in terms of probability of error, compared to strategies 1 and 5.
\begin{figure}
	\centering
	\includegraphics[trim={0cm 0.2cm 0cm 0.5cm},clip,scale=0.33]{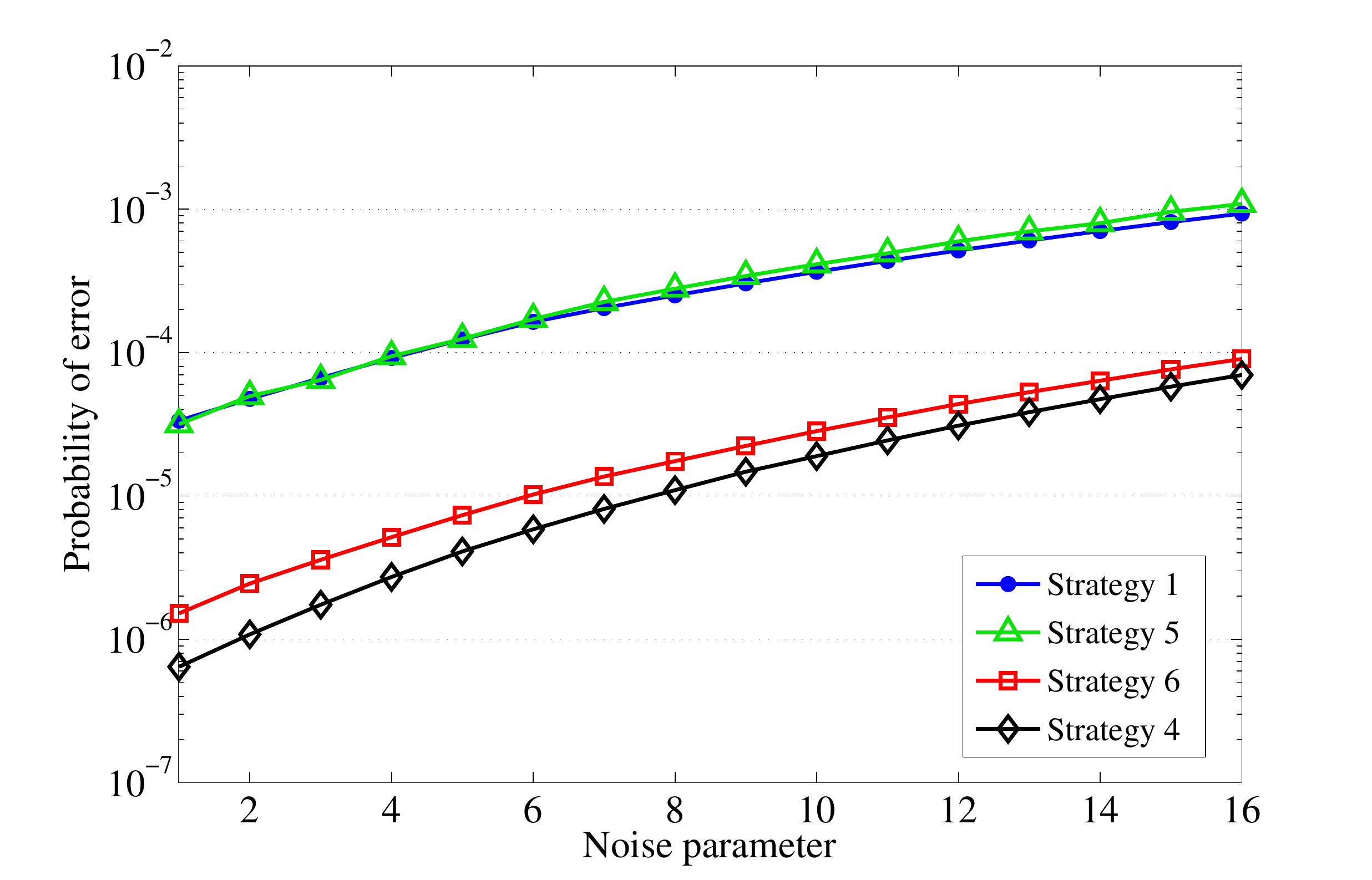}
	\caption{Error probability versus noise parameter for two-symbol ISI case}
	\label{f_shabih_isi_2mem}
	\vspace{-1em}
\end{figure}


\section{Conclusion} \label{conclusion}

In this paper, the transmitter's limitations are studied in MC, in particular the limited molecule production rate and finite storage capacity. By changing the duration in which molecules are released, an adaptive release duration modulation is proposed. The objective is finding the optimal transmission release durations to minimize the probability of error.
Important properties of the optimal solution are proved and it is shown that optimal release durations are decreasing. Using these properties, upper and lower bounds are derived on the minimum probability of error. It is shown that increasing the storage capacity significantly improves the system performance in terms of probability of error. It is also shown that an adaptive threshold receiver performs better than a fixed threshold receiver. 
The proposed modulation scheme also performs better than transmission method proposed in  \cite{movahednasab2016adaptive}. Because this method is not  proposed for a system with transmitter's limitations.
While in our proposed strategy, the complexity of the system is increased compared to non-adaptive strategy, we show that the required memories for the proposed transmitter and receiver are limited.


\begin{appendices}

\section{Convexity of probability of error in (\ref{def_F(delta_i)})} 
\label{Convex objective function}
	
To prove the convexity of $P_{e|1}$, consider $F(\cdot)$ in (\ref{def_F(delta_i)}). We have
$\frac{d^2}{d\Delta_i d\Delta_j}F(\Delta_1, \Delta_2, \cdots)=0$ for $i\neq j$ and
$
\frac{d^2}{d{\Delta_i}^2}F(\Delta_i)=\sum_{i=1}\frac{1}{2^i}g(\Delta_i),
$
where $g(.)$ is defined as
\begin{align}
g(x) =e^{-y}\frac{y^{T_\text{th}-1}}{T_\text{th}!}(y-T_\text{th})\Big|_{y=M+x+\lambda}. \nonumber
\end{align}
Since $M+x+\lambda>T_\text{th}$, $g(\Delta_i)>0, \forall i=1,\cdots,J$. Thus, $F(\cdot)$ is convex. Therefore, $P_{e|1}$ is also a convex function.

\section{Proof of Lemma \ref{lem_1}} \label{Appx_Proof_lem_property1}

We propose a solution for (\ref{minF}) that satisfies  all conditions in (\ref{C_{1i}_no isi}) and (\ref{C_{2i}_no isi}). Assume only $C_{1i}=0$, for $i \geq J$ ($C_{2i} \neq 0$) and $\mu^*_{1i}=0$ for $i >J$. Thus, the conditions result in
\begin{equation}
\begin{cases}
     \beta(T-T_M)-\sum_{j=1}^{i}\Delta^*_j > 0, \text{ for }i<J, \\
     \beta(T-T_M)-\sum_{j=1}^{J}\Delta^*_j= 0,\\
     \Delta^*_j= 0,\text{ } \mu^*_{1j}=0, \text{ for } j>J.
\end{cases}
\label{positive_sum}
\end{equation}
First, note that a solution subject to the conditions in (\ref{positive_sum}) satisfies the conditions in (\ref{C_{1i}_no isi}) and (\ref{C_{2i}_no isi} ). 
The  above assumptions and (\ref{g1}) result in
\begin{align}
\left(
{\frac{\partial F}{\partial \Delta_1}}, { \frac{\partial F}{\partial \Delta_2}}, 
\ldots \right)^\text{T} &=\mu^*_{1J}\left( -1, \ldots, -1, 0, \ldots \right)^\text{T}, \label{g2}
\end{align}
where $\mu^*_{1J}>0$ due to KKT conditions. Since the objective function in (\ref{minF}) is convex and solution of (\ref{positive_sum}) satisfies the conditions in (\ref{C_{1i}_no isi}) and (\ref{C_{2i}_no isi}), this solution is the global minimum.
From (\ref{g2}), we obtain
\begin{align}
\frac{d}{{d\Delta_k}}P_{e|1}(x)\Big|_{M+\Delta^*_k}=2^{k-h}\frac{d}{{d\Delta_h}}P_{e|1}(x)\Big|_{M+\Delta^*_h}. \label{mu_F1_F2}
\end{align}
Since $P_{e|1}$  is a convex function, for $k>h$, (\ref{mu_F1_F2}) results in $\Delta^*_k < \Delta^*_h$,
which means that the optimal release duration increments are decreasing.
As a result if one of them ($\Delta^*_m$) is negative, then the $\{\Delta^*_i\}_{i=m,\cdots,J}$ are negative too. (\ref{positive_sum}) results in
\begin{align}
\beta(T-T_M)-\sum_{j=1}^{J-1}\Delta^*_j= \Delta^*_J<0, \nonumber
\end{align}
which is in contradiction with (\ref{positive_sum}). Thus, all $\Delta^*_j$s are nonnegative.


\section{Proof of Lemma \ref{lem_2}} \label{Appx_Proof_lem_property2}

According to \emph{Lemma} \ref{lem_1}, the optimal release duration increments are non-negative and decreasing and the optimal solution satisfies (\ref{positive_sum}). Assume that $C_{1J}$ is active. Thus
$\sum_{j=1}^{J}\Delta^*_j=\beta(T-T_M)$, 
and $\Delta^*_j=0 \text{, for }j>J$.
If the number of positive $\Delta^*_i$s ($J$) is unlimited, (\ref{positive_sum}) forces
\begin{align}
\lim_{j\rightarrow \infty}\Delta^*_j=0, \label{lim_delta_i}
\end{align} 
and $\Delta^*_{j-1}>0$.
From (\ref{def_F(delta_i)}) and (\ref{g2}), we obtain
\begin{align}
e^{-\Delta^*_{j-1}}\frac{(M+\Delta^*_{j-1}+\lambda)^{T_\text{th}}}{T_\text{th}!}=\frac{e^{-\Delta^*_{j}}}{2}\frac{(M+\Delta^*_{j}+\lambda)^{T_\text{th}}}{T_\text{th}!}. \nonumber
\end{align}
From (\ref{lim_delta_i}) and the above equation, we have
$e^{-\Delta^*_{j-1}}=\frac{1}{2}(\frac{M+\lambda}{M+\Delta^*_{j-1}+\lambda})^{T_\text{th}} < \frac{1}{2},$
which means that $\Delta^*_{j-1}>\ln{2}$.
Thus, $\lim_{i\rightarrow \infty}\Delta^*_{j-1}\neq 0$. This contradicts (\ref{lim_delta_i}) and completes the proof.


\section{Proof of Lemma \ref{lem_3}} \label{Appx_Proof_lem_property3}

Consider the smallest optimal positive released molecule increment as $\Delta^*_J$. Remind that $\Delta^*_j=0$ for $j>J$. We prove that $\Delta^*_J \in [0,a_J]$.
We use contradiction. Suppose  $\Delta^*_J \notin [0,a_J]$ \emph{i.e.}, $\Delta^*_J>a_J$. This scheme is referred as scheme A.
\\
$\bullet$ Scheme A: Choose the smallest optimal positive $\Delta_J=\Delta^*_J$ in state $s_J$ which $\Delta^*_J>a_J$ and $\Delta_j=0$ in state $s_j$ for $j>J$. So condition $C_{1J}$ is active:
$\sum_{j=1}^{J-1}\Delta^*_j +\Delta^*_{J} =\beta(T-T_M)$.
The probability of error for scheme A is
\begin{align}
P_{e}^{A}=\frac{1}{2}\Big( P_{e|0}+
\sum_{j=1}^{J}{\frac{1}{2^{j}}P_{e|1}(M+\Delta^*_j)}+\sum_{j\geq J+1}^{}{\frac{1}{2^{j}}P_{e|1}(M)}\Big).
\nonumber
\end{align}
Now, we show that there is a better scheme of transmission which has a less probability of error.
\\
$\bullet$ Scheme B: Consider real numbers $b_1,b_2 \in \mathbb{R}$ (will be derived later in text). Choose $\Delta_J=b_2$ in state $s_J$, $\Delta_{J+1}=b_1$ in state $s_{J+1}$ and $\Delta_j=0$ in state $s_j$ for $j>J+1$. So condition $C_{1{J+1}}$ is active
$\sum_{j=1}^{J-1}\Delta^*_j +b_1+b_2 =\beta(T-T_M)$.
The probability of error for scheme B is
	\begin{align}
	P_{e}^{B}=&\frac{1}{2} P_{e|0}+\sum_{j=1}^{J-1}{\frac{1}{2^{j}}P_{e|1}(M+\Delta^*_j)}+\frac{1}{2^{J+1}}P_{e|1}(M+b_2)\nonumber\\
	&+\frac{1}{2^{J+2}}P_{e|1}(M+b_1)+\sum_{j\geq J+2}^{}{\frac{1}{2^{j+1}}P_{e|1}(M)}.
	\nonumber
	\end{align}
It is shown in Appendix  \ref{Appendix_Proof_b1_b2}
that positive numbers $b_1$ and $b_2$ exist, which satisfy 
$\Delta^*_{J}=b_1+b_2$ and the following property (see Fig. \ref{f8_b1b2}).
\begin{align}
\frac{d}{{dx}}P_{e|1}(x)\Big|_{M+b_1}=2\frac{d}{{dx}}P_{e|1}(x)\Big|_{M+b_2}
\label{b1,b2 dPe/d eq.}.
\end{align}
The result of subtraction is
\begin{align}
P_{e}^{A}-P_{e}^{B}=&\frac{1}{2^{J+1}}\Big(P_{e|1}(M+\Delta^*_{J})-P_{e|1}(M+b_2)\nonumber\\
&+\frac{1}{2}(P_{e|1}(M)-P_{e|1}(M+b_1)\Big)\nonumber\\
\overset{(a)}{\geq} &\frac{1}{2^{J+1}}\Big(-(\Delta^*_{J}-b_2)\frac{d}{dx}P_{e|1}(M+b_2)\nonumber\\
&+\frac{1}{2}b_1\frac{d}{dx}P_{e|1}(M+b_1)\Big), \nonumber
\end{align}
where (a) follows from the convexity of the probability of error (see Appendix \ref{Convex objective function}) and its decreasing property (see (\ref{d/dx P_e1})). Combining (\ref{b1,b2 dPe/d eq.}) and the above equation, results in
$P_{e}^{A}-P_{e}^{B}>~0$,
which means the error probability of scheme B is less than scheme A. It means scheme A cannot be optimal. So, $\Delta^*_J \in [0,a_J]$ and from (\ref{g2}), $\Delta^*_j\in [a_{j+1},a_j]$ for $j=1,\cdots,J-1$.
\begin{figure}
\vspace{-1em}
	\centering
	\includegraphics[trim={1cm 1.8cm 5cm 1.4cm},clip, scale=0.3]{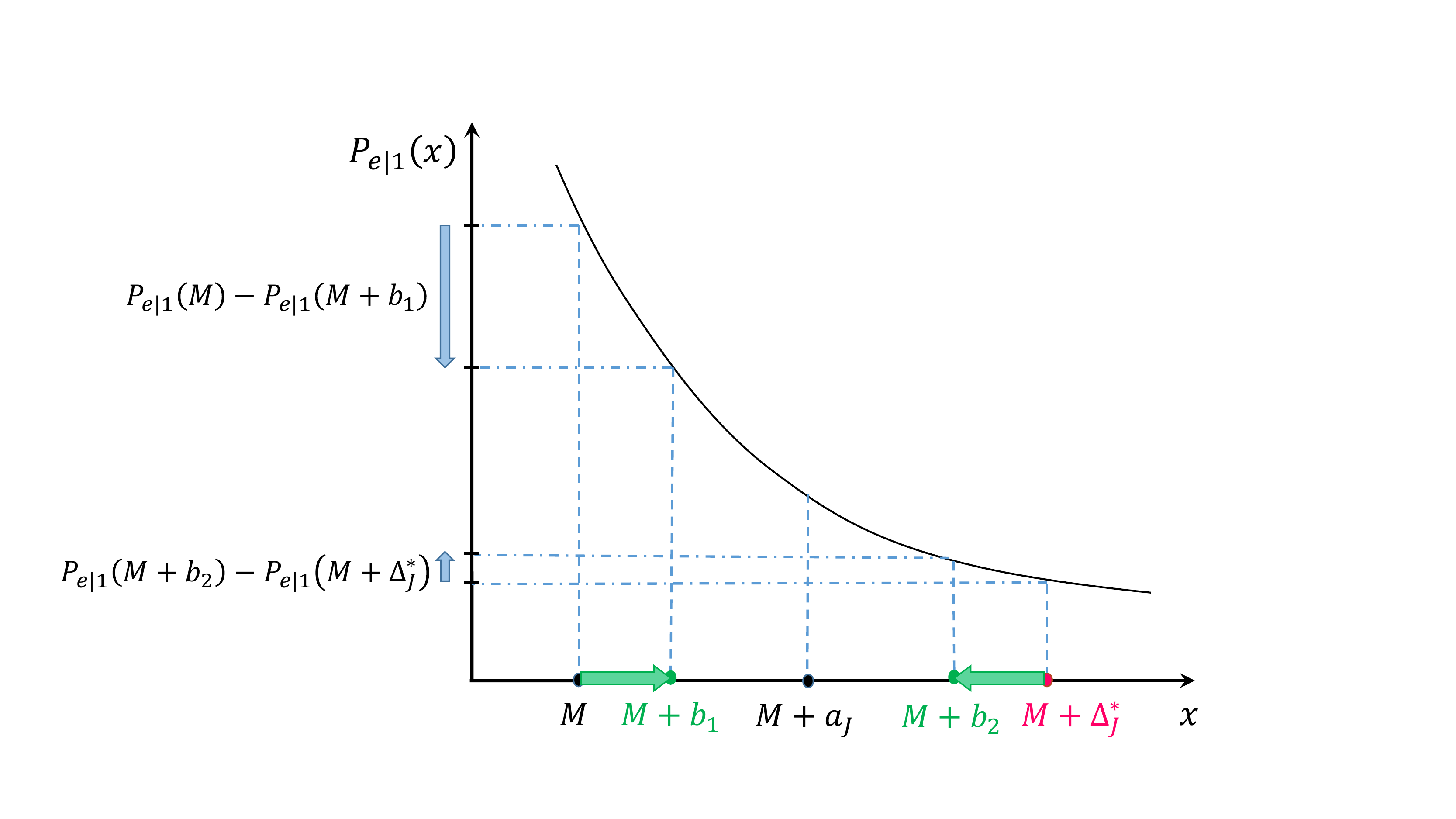}
	\caption{the intervals of optimal released molecule increments}
	\label{f8_b1b2}
	\vspace{-0.5em}
\end{figure}


\subsection{Proof of Existence of Points $b_1$  and  $b_2$} \label{Appendix_Proof_b1_b2}

Suppose that $\Delta^*_J>a_J$. Let us define
	\begin{align}
	h(x)=\frac{d}{dx}P_{e|1}(x)\Big|_{M+\Delta^*_J-x} - \frac{1}{2}\frac{d}{dx}P_{e|1}(x)\Big|_{M+x}.\nonumber
	\end{align}
	We want to show that $h(x)$ has a root in the interval  $[0,\Delta^*_J-~a_J]$. Consider two points $x_1=0$ and $x_2=\Delta^*_J-a_J$.
	\begin{align}
	h(x_1)=\frac{d}{dx}P_{e|1}(x)\Big|_{M+\Delta^*_J} - \frac{1}{2}\frac{d}{dx}P_{e|1}(x)\Big|_{M}. \nonumber
	\end{align}
	Combining (\ref{d/dx P_e1}), (\ref{min_ai s}) and the above equation  results in $h(x_1)>0$. Similarly, we obtain $h(x_2)<0$.
	These inequalities, say that sign of the continuous function $h(x)$ at the sides of interval  $[0,\Delta^*_J-a_J]$ changes. According to the \emph{Bolzano's theorem}, $\exists\text{  } 0 <x_0 <\Delta^*_J-a_J \text{  }: h(x_0)=0$,
	the above equation means
	\begin{align}
	\frac{d}{dx}P_{e|1}(x)\Big|_{M+\Delta^*_J-x_0} = \frac{1}{2}\frac{d}{dx}P_{e|1}(x)\Big|_{M+x_0}. \nonumber
	\end{align}
	Using the above equation, we define 
	$	b_1=x_0 \text{ }, b_2=\Delta^*_J-x_0. \nonumber	$


\section{Convexity of $P_e$ in the presence of ISI} 
\label{proof_H_positive semi definite}

To show that  $P_e$ is a convex function of its arguments, we obtain the Hessian of $P_e$ and check whether or not it is positive semidefinite (PSD). Considering the definition of $s_t$, as the number of ``1"s after the last bit ``0", we obtain
\begin{align}
\mathbf{H}(P_e)&=\sum_{t=1}^{K}\frac{1}{2^{t+1}}\Big(\mathbf{H}(P_{e|s_t,0})+\mathbf{H}(P_{e|s_t,1})\Big), \label{hessian_sum}
\end{align}
where $\mathbf{H}(.)$ is the Hessian of the function. In the following, we show that both $\mathbf{H}(P_{e|s_t,0})$ and $\mathbf{H}(P_{e|s_t,1})$ are PSD. Therefore, $\mathbf{H}(P_e)$ is PSD too.
To do that, we have $\mathbf{H}(P_{e|B})=\Big[\frac{d^2}{d\Delta_id\Delta_j}P_{e|B}\Big]_{i,j=1, \cdots, K+1}$, $\text{ for }B=0,1$.
If the number of positive $\Delta_i$s equal to $K$, we have
	\begin{align}
	&\frac{d^2}{d\Delta_id\Delta_j}P_{e}=\sum_{t=1}^{K}\frac{1}{2^t}\frac{d^2}{d\Delta_id\Delta_j}P_{e|s_t} \nonumber\\
	&=\sum_{t=1}^{K}\frac{1}{2^t}\frac{1}{2}\Big(\frac{d^2}{d\Delta_id\Delta_j}P_{e|s_t,0}+\frac{d^2}{d\Delta_id\Delta_j}P_{e|s_t,1}\Big), \nonumber
	\end{align}
	where from (\ref{weigthed_sum}) we have
	\begin{align}
	&\frac{d^2}{d\Delta_id\Delta_j}P_{e|s_t,0}=\nonumber\\
	&c_ic_je^{-(c_0M+\sum_{i=1}^{K}c_i\Delta_i)}\frac{(T_\text{th}-(c_0M+\sum_{i=1}^{K}c_i\Delta_i)}{T_\text{th}!}. \nonumber
	\end{align}
	In signal $w_{0}$, we have $c_0M+...+c_K\Delta_K<T_\text{th}$ and $c_i,c_j\geq 0$. So we can write
	\begin{align}
	\frac{d^2}{d\Delta_id\Delta_j}P_{e|s_t,0}\Big|_{i,j=1}^{K}=c_ic_jF_{0}^+(w_{0}) \geq 0, \nonumber
	\end{align}
	where $F_{0}^+(w_{0})$ is a positive function.
	The Hessian matrix of $P_{e|s,0}$ is as follows.
	\begin{align}
	\mathbf{H}(P_{e|s,0})=
	F_{0}^+(w_{0})\left[
	\begin{array}{ccccc}
	{c_1}^2& c_1c_2  &\ldots& c_1c_K \\
	\vdots&\vdots&\ddots&\vdots \\
	c_1c_K & c_2c_K&\ldots &{c_K}^2
	\end{array} \right] \nonumber
	\end{align}
	and 
	$\underline{X}^T \mathbf{H}(P_{e|s,0})\underline{X}=F_{0}^+(w_{0})\Big(c_1X_1+...+c_K X_K\Big)^2\geq 0$.
	As a result, $\mathbf{H}(P_{e|s,0})$ is a PSD matrix. Similarly, we can show that $\mathbf{H}(P_{e|s,1})$ is also PSD.
	From (\ref{hessian_sum}), $\mathbf{H}(P_e)$ is a sum of PSD matrices. Thus, it is also a PSD matrix. Therefore, $P_e$ is a convex function of $\{\Delta_i\}_{i=1}^{K}$s.


\section{Adaptive threshold receiver} \label{Adaptive_th_no_isi}

In no-ISI case, we want the receiver to adjust its threshold such that $P_e$ decreases. To do so, we need to know the state of the system and the number of transmitted molecules in that state. Since $P_e$ is low, we ignore error propagation effects. Thus, the receiver uses its estimation instead of the system state and faces a hypothesis testing problem in each state. Using ML decoding rule in state $s_j$, results in
$
\frac{P(y|s_j,1)}{P(y|s_j,0)}  \underset{B_{j}=0}{\overset{B_{j}=1}\gtrless }1. \nonumber
$
Combining this  inequality and (\ref{poiss_dist}), we obtain the optimal threshold as
\begin{align}
T_\text{th}^{j}=\frac{M+\Delta^*_{j}}{\text{ln}\Big(1+\frac{M+\Delta^*_{j}}{\lambda}\Big)}. \nonumber
\end{align}
Note that the number of positive $\Delta^*_{i}$s is limited to $J$. So the number of different thresholds is also limited and for states $s_i \text{,}i=J+1,\cdots$ the threshold is fixed. As a result, the receiver needs a limited memory to save these thresholds. From (\ref{Pe0_Pe1_dTi}) error probability of adaptive threshold receiver is
\begin{align}
\tilde{P}_e= \sum_{y\geq T_\text{th}^j}^{} \frac{e^{-\lambda}}{2}\frac{{\lambda}^y}{y!} +\sum_{j=1}^{\infty} \sum_{y < T_\text{th}^j}^{}\frac{e^{-(M+\Delta^*_j+\lambda)}}{2^{j}}\frac{{(M+\Delta^*_j+\lambda)}^y}{y!}.  
\label{pe_adaptive_adaptive Th_1iter}
\end{align}

\begin{remark} \label{remark:min Pe_opt delta & opt th _ simmultaneously}
	We can optimize the release duration increments and thresholds simultaneously in each state by solving the following optimization problem.
\begin{equation}
\min_{\begin{array}{cc}
	{\{\Delta_i\}_{i=1}^{\infty}} \\
	{\{T_\text{th}^i\}_{i=1}^{\infty}}
	\end{array}}{\tilde{P}_e}, \text{ }\text{ }\text{ s. t. :} \left\{
\begin{array}{rl}
0 \leq \sum_{j=1}^{i}\Delta_j \leq B_M \text{ },\forall i, \\
0<T_\text{th}^i<M+\Delta_i\text{ },\forall i,
\end{array} \right.  \nonumber
\end{equation}
 \end{remark}


\section{Sub-optimal solution for one-symbol ISI} \label{appx_sol_1bit_ISI}

A local minimum of (\ref{min_pe_isi_b}) can be found if $C_{1i}$ is  active for $i = J,J+1,\cdots $ ($J$ is the number mentioned in \emph{Lemma} \ref{lem_2}),  and $\mu^*_{1i}=\mu^*_{2i} \geq 0$ for $i =J+1,\cdots$. Therefore, $C_{2i}$ is  active for $i =J+1,\cdots $, \emph{i.e.,} $l_i=M$ and 
\begin{align}
v_i=p_1M=v_{J+1}, \text{ for } i =J+1,\cdots . \label{fixed_isi after J}
\end{align}
Thus, $\mu^*_{1i}= 0$ for $i=1,\cdots,J-1$ and $\mu^*_{2i}= 0$ for $i=1,\cdots,J$.
If we define $U_i=[u_i(1),u_i(2),\cdots]^T$, where
\begin{subnumcases}{u_i(j)=}
-1, j \leq i,   \nonumber \\  
0\text{ }\text{ }, j>i, \nonumber
\end{subnumcases}
then, (\ref{kkt_isi_b}) results in
$\nabla\mathbb{E}_I[P_{e|1}(p_0l_i^*+v_i^*)] =\mu^*_{1J}U_J$.
Combining (\ref{kkt_isi_b}) and the above equation results in
\begin{align}
\left(\begin{array}{ccc} \frac{1}{2} \frac{d}{dl_1}P_{e|1}(p_0l_1+v_1)\\
\vdots\\
\frac{1}{2^J}\frac{d}{dl_J}P_{e|1}(p_0l_J+v_J)
\end{array} \right)=
\frac{\mu_{1J}^*}{p_0}\left(\begin{array}{ccc} -1 \\ \vdots\\-1\end{array} \right). \label{kkt_isi2}
\end{align}\\
Form (\ref{fixed_isi after J}),  we have 
$\text{P}(I=v_{J+1})=\sum_{j=J+1}^{\infty}\text{P}(s_{i-1})=\frac{1}{2^{J}}. \nonumber $
As a result, ISI takes $J+1$ different values.
Comparing (\ref{kkt_isi2}) with (\ref{g2}), we conclude that these two problem have the same solution, which is the optimal solution of no-ISI case. If we denote the optimal released molecule increments in no-ISI case by $\Delta_{i, \text{no-ISI}}^{*}$, the solution of (\ref{kkt_isi2}) will be
\begin{align}
p_0l_i^*+v_i^*=M+\Delta_{i, \text{no-ISI}}^{*} \text{ for }i=1,\cdots,J. \nonumber
\end{align}
Recall that $v_1$ is the ISI value in state $s_0$ (the previous bit is ``0"). Thus,  $v_1^*=0$ and from the above equation, $l_1^*=\frac{1}{p_0}(M+\Delta_{1, \text{no-ISI}}^{*})$. For $i \geq J+1$, we have $l_i=M$ and for $2\leq i \leq J$, we have  $v_i^*=p_1 l^*_{i-1}$ and $l_i^*$ is recursively obtained as
\begin{align}
l_i^*=\frac{1}{p_0}( M+\Delta_{i, \text{no-ISI}}^{*}- p_1 l^*_{i-1}), \text{ for }i=2,\cdots,J. \label{li*s_appx}
\end{align}


\section{Sub-optimal solution for two-symbol ISI} \label{appx_sol_2bit_ISI}

If we assume that $v_{2i-1}$s for $i=1,2,\cdots$ are equal ($v_{2i-1}=v_1$ for $i=1,2,\cdots$), and substitute their average value in non-adaptive strategy instead of $v_1$, we can simplify (\ref{pe_isi_2b}) and obtain a sub-optimal solution ($l^*_i$s).
\begin{align}
v_1&=\mathbb{E}[\{v_{2i-1}\}_{i=1}^\infty]=\frac{\sum_{i=1}^{\infty}v_{2i-1}P_{\tilde{s}_{2i-1}}}{\sum_{i=1}^{\infty}P_{\tilde{s}_{2i-1}}} \nonumber
\\
&=\frac{\frac{1}{2}(0)+p_2M(\frac{1}{4}+\frac{1}{8}+\cdots)}{\frac{1}{2}+\frac{1}{4}+\cdots}=\frac{p_2M}{2}.\nonumber
\end{align}
Substituting the above  equation and (\ref{state_dist_2ISI}) in   (\ref{pe_isi_2b}) results in the following minimization problem:
\begin{align}
\min_{\{l_1\}_{i=1}^\infty} \mathbb{E}_I[P_{e|1,I}]=&\frac{1}{2}P_{e|1}(p_0l_1+v_{2i-1}) \nonumber\\
&+\sum_{i=1}^{\infty} \frac{1}{2^{i+1}} P_{e|1}(p_0l_{2i}+v_{2i}),
\nonumber
\end{align}
subject to (\ref{C1i_isi_b}) and (\ref{C2i_isi_b}).
Using Lagrangian method, results in
\begin{align}
\nabla\mathbb{E}_I[P_{e|1,I}],=\sum_{k=1}^{2}\sum_{i=1}^{\infty}\mu^*_{ki}\nabla C_{ki}(\{l_i^*\}_{i=1}^{\infty}),  \label{kkt_2isi_bbbb}
\end{align}
where $\mu^*_{ki}$s are Lagrangian multipliers. Due to KKT conditions, we have $\mu^*_{ki} C_{ki}=0, \forall i$. 

A local minimum of (\ref{min_pe_isi_b}) is obtained if $C_{1i}$ is  active for $i \geq J $ ($J$ is the number mentioned in \emph{Lemma} \ref{lem_2}),  and $\mu^*_{1i}=\mu^*_{2i} \geq 0$ for $i \geq J+1$. Therefore, $C_{2i}$ is  active for $i \geq J+1$ (\emph{i.e.}, $l_i=M$ and 
$v_i=p_1M, \text{ for } i \geq J+1 $).
So $\mu^*_{1i}= 0$ for $i=1,\cdots,J-1$ and $\mu^*_{2i}= 0$ for $i=1,\cdots,J$.
Thus, (\ref{kkt_2isi_bbbb}) results in
\begin{align}
\left(\begin{array}{ccc} \frac{1}{2}\frac{d}{dl_1}P_{e|1}(p_0l_1+v_1)
\\ \vdots 
\\ \frac{1}{2^J}\frac{d}{dl_J}P_{e|1}(p_0l_J+v_J)
\end{array} \right)  =\frac{\mu^*_{1J}}{p_0}\left(\begin{array}{ccc} -1 \\ \vdots\\-1\end{array} \right). \label{grad_2bit_isi}
\end{align}
Comparing (\ref{grad_2bit_isi}) and (\ref{g2}), we conclude that these equations have the same solution which is the optimal solution of no-ISI case. Therefore, the solution of (\ref{grad_2bit_isi}) will be
\begin{align}
p_0l_i^*+v_i^*=M+\Delta_{i, \text{no-ISI}}^{*} \text{ for }i=1,\cdots,J. \nonumber
\end{align}
Recall that $v_1$ is the ISI value in state $s_1$ (the previous bits are ``0"). Thus,  $v_1^*=0$ and from the above equation, $l_1^*=\frac{1}{p_0}(M+\Delta_{1, \text{no-ISI}}^{*})$. For $i\geq2$, we have  $v_i^*=p_1 l^*_{i-1}+p_2l^*_{i-2}$ and $l_i^*$ is recursively obtained as
\begin{align}
l_i^*=\frac{1}{p_0}(M+\Delta_{i, \text{no-ISI}}^{*}-{p_1} l^*_{i-1}-p_2 l^*_{i-2}), 2\leq i \leq J. \label{li*s_appx_2bit_isi}
\end{align}


\section{Closed form rates of one-symbol ISI in \cite{movahednasab2016adaptive}} \label{movahed_1mem}

In \cite{movahednasab2016adaptive}, if the transmitter starts with a ``0" bit, $X_1=0$ (state $s_0$) and 
$X_2=B_2M $ (state $s_{B_2}$).
For $i \geq 3$ if $B_i=0$, the ISI resets (state $s_0$) and if $B_i=1$ (state $s_1$), we have
$X_i=M-\frac{p_1}{p_0} X_{i-1}. \nonumber$
We simplify $X_{i}$ in state $s_j$ recursively~as
\begin{align}
X_{i,s_j}=B_iM \sum_{k=0}^{j}(-\frac{p_1}{p_0})^{k}. \nonumber
\end{align}
The conditional probability of error is
$P_{e|1,s_j}=P_{e|1}({p_0}M), \forall j\in \mathbb{N},\nonumber$
and we have
$P_{e|0,s_j}=P_{e|0}(p_1 X_{i,s_j}).$
Using the probability of states in (\ref{state_probes}) causes
\begin{align}
P_{e|0}&=\sum_{j=0}^{\infty}(\frac{1}{2})^{j+1}\Big(P_{e|0}(p_1 X_{i,s_j})\Big). \nonumber
\end{align}
$P_{e|1}$ is a constant function but $P_{e|0}$ is an increasing function of its argument. Because
$\frac{d}{dx}P_{e|0}(x)=e^{-(x+\lambda)}\frac{(x+\lambda)^{T_\text{th}}}{T_\text{th}!}>0. \nonumber$
We use this result to find upper and lower bounds on $P_{e|0}$.
\\
$\text{I.}$ \emph{Upper bound}:
	if $B_i=1$, we have
	\begin{align}
	X_{i,s_j}&<M\Big(1-(\frac{p_1}{p_0})+\cdots+(\frac{p_1}{p_0})^{2k}\Big)  \nonumber\\
	&:=\tilde{X}_{2k} ,\text{ }0<2k<j, k\in \mathbb{N} \nonumber
	\end{align}
	According to the above inequality, other terms ($j>2k$) are less than an even term. Therefore, we have
	\begin{align}
	P_{e}=&\frac{1}{2}\Big(\frac{1}{2}(P_{e|00}+P_{e|10})+P_{e|1}\Big)\nonumber\\
	=&\frac{1}{2}\Big(\frac{1}{2}\big(P_{e|0}(0)+\sum_{j=1}\frac{1}{2^j}P_{e|0}(p_1X_{i,s_j})\big)\Big)+\frac{1}{2}P_{e|1} \label{apx_pe_zoj} \\
	<&\frac{1}{4}P_{e|0}(0)+\sum_{j=1}^{2k}\frac{1}{2^{j+2}}P_{e|0}(p_1X_{i,s_j})\nonumber\\
	&+\frac{1}{2^{2k+2}}P_{e|0}(p_1\tilde{X}_{2k})+\frac{1}{2}P_{e|1} :=\tilde{P}_{e,2k}. \label{movahed_upper B_1mem}
	\end{align}\\
$\text{II.}$ \emph{Lower bound}:
	if $B_i=1$, we have
	\begin{align}
	X_{i,s_j}&>{M}\Big(1-(\frac{p_1}{p_0})+...-(\frac{p_1}{p_0})^{2k-1}\Big)  \nonumber\\
	&:=\tilde{X}_{2k-1}   ,\text{ }0<2k<j, k\in \mathbb{N}. \nonumber
	\end{align}
	According to the above inequality, other terms ($j>2k-1$) are more than an odd term. Therefore, using (\ref{apx_pe_zoj}) we have 
	\begin{align}
	P_{e}>&\frac{1}{4}P_{e|0}(0)+\sum_{j=1}^{2k-1}\frac{1}{2^{j+2}}P_{e|0}(p_1X_{i,s_j})\nonumber\\
	&+\frac{1}{2^{2k+1}}P_{e|0}(p_1\tilde{X}_{2k-1})+\frac{1}{2}P_{e|1}
	:=\tilde{P}_{e,2k-1}. \label{movahed_lower B_1mem}
	\end{align}

\end{appendices}


\bibliographystyle{ieeetr}
\bibliography{ref}

\begin{thebibliography}{10}

\bibitem{khaloopour2018adaptive}
L.~Khaloopour, M.~Mirmohseni, and M.~Nasiri-Kenari, ``An adaptive pulse-width
  modulation for limited molecule production and storage,'' in {\em 2018 Iran
  Workshop on Communication and Information Theory (IWCIT)}, pp.~1--6, IEEE,
  2018.

\bibitem{nakano2013molecular}
T.~Nakano, A.~W. Eckford, and T.~Haraguchi, {\em Molecular communication}.
\newblock Cambridge University Press, 2013.

\bibitem{farsad2016comprehensive}
N.~Farsad, H.~B. Yilmaz, A.~Eckford, C.-B. Chae, and W.~Guo, ``A comprehensive
  survey of recent advancements in molecular communication,'' {\em IEEE
  Communications Surveys \& Tutorials}, vol.~18, no.~3, pp.~1887--1919, 2016.

\bibitem{malaka1995kinetic}
R.~Malaka, T.~Ragg, and M.~Hammer, ``Kinetic models of odor transduction
  implemented as artificial neural networks,'' {\em Biological cybernetics},
  vol.~73, no.~3, pp.~195--207, 1995.

\bibitem{noel2014improving}
A.~Noel, K.~C. Cheung, and R.~Schober, ``Improving receiver performance of
  diffusive molecular communication with enzymes,'' {\em IEEE Transactions on
  NanoBioscience}, vol.~13, no.~1, pp.~31--43, 2014.

\bibitem{shahmohammadian2013nano}
H.~ShahMohammadian, G.~G. Messier, and S.~Magierowski, ``Nano-machine molecular
  communication over a moving propagation medium,'' {\em Nano Communication
  Networks}, vol.~4, no.~3, pp.~142--153, 2013.

\bibitem{nakano2013transmission}
T.~Nakano, Y.~Okaie, and A.~V. Vasilakos, ``Transmission rate control for
  molecular communication among biological nanomachines,'' {\em IEEE Journal on
  Selected Areas in Communications}, vol.~31, no.~12, pp.~835--846, 2013.

\bibitem{movahednasab2016adaptive}
M.~Movahednasab, M.~Soleimanifar, A.~Gohari, M.~Nasiri-Kenari, and U.~Mitra,
  ``Adaptive transmission rate with a fixed threshold decoder for
  diffusion-based molecular communication,'' {\em IEEE Transactions on
  Communications}, vol.~64, no.~1, pp.~236--248, 2016.

\bibitem{garralda2011simulation}
N.~Garralda, I.~Llatser, A.~Cabellos-Aparicio, and M.~Pierobon,
  ``Simulation-based evaluation of the diffusion-based physical channel in
  molecular nanonetworks,'' in {\em Computer Communications Workshops (INFOCOM
  WKSHPS), 2011 IEEE Conference on}, pp.~443--448, IEEE, 2011.

\bibitem{jamali2016symbol}
V.~Jamali, A.~Ahmadzadeh, and R.~Schober, ``Symbol synchronization for
  diffusive molecular communication systems,'' {\em arXiv preprint
  arXiv:1610.09141}, 2016.

\bibitem{chude2015diffusion}
U.~A. Chude-Okonkwo, R.~Malekian, and B.~Maharaj, ``Diffusion-controlled
  interface kinetics-inclusive system-theoretic propagation models for
  molecular communication systems,'' {\em EURASIP Journal on Advances in Signal
  Processing}, vol.~2015, no.~1, p.~89, 2015.

\bibitem{arjmandi2016ion}
H.~Arjmandi, A.~Ahmadzadeh, R.~Schober, and M.~N. Kenari, ``Ion channel based
  bio-synthetic modulator for diffusive molecular communication,'' {\em IEEE
  transactions on nanobioscience}, vol.~15, no.~5, pp.~418--432, 2016.

\bibitem{bafghi2018diffusion}
H.~G. {Bafghi}, A.~{Gohari}, M.~{Mirmohseni}, G.~{Aminian}, and
  M.~{Nasiri-Kenari}, ``Diffusion-based molecular communication with limited
  molecule production rate,'' {\em IEEE Transactions on Molecular, Biological
  and Multi-Scale Communications}, vol.~4, pp.~61--72, June 2018.

\bibitem{gohari2016information}
A.~Gohari, M.~Mirmohseni, and M.~Nasiri-Kenari, ``Information theory of
  molecular communication: Directions and challenges,'' {\em IEEE Transactions
  on Molecular, Biological and Multi-Scale Communications}, vol.~2, no.~2,
  pp.~120--142, 2016.

\bibitem{mosayebi2014receivers}
R.~Mosayebi, H.~Arjmandi, A.~Gohari, M.~Nasiri-Kenari, and U.~Mitra,
  ``Receivers for diffusion-based molecular communication: Exploiting memory
  and sampling rate,'' {\em IEEE Journal on Selected Areas in Communications},
  vol.~32, no.~12, pp.~2368--2380, 2014.

\bibitem{eckford2007nanoscale}
A.~W. Eckford, ``Nanoscale communication with brownian motion,'' in {\em 2007
  41st Annual Conference on Information Sciences and Systems}, pp.~160--165,
  IEEE, 2007.

\end{thebibliography}
 

\end{document}